\newtheorem{lemma}{Lemma}
\newtheorem{obs}{Observation}
\newtheorem{corollary}{Corollary}
\newtheorem{remark}{Remark}
\newtheorem*{prop*}{Properties}
\newtheorem*{basicprop*}{Basic Properties}
\newcommand{\bestparameterizedsolution}{O(k(|V| + |E|)\log{|V|})}
\newcommand{\tmp}{\mathcal{T}}
\newcommand{\pathcover}{\mathcal{P}}
\newcommand{\flowG}{\mathcal{G}}
\newcommand{\flowV}{\mathcal{V}}
\newcommand{\flowE}{\mathcal{E}}
\newcommand{\Nzero}{\mathbb{N}_{0}}
\newcommand{\invA}{\textbf{Invariant A}}
\newcommand{\invB}{\textbf{Invariant B}}
\newcommand{\invC}{\textbf{Invariant C}}
\newcommand{\residual}[2]{\mathcal{R}(#1, #2)}
\newcommand{\width}{\mathsf{width}}
\newcommand{\blue}{$blue$}
\newcommand{\red}{$red$}
\newcommand{\purple}{$purple$}
\newcommand{\processed}{processed}
\title{Sparsifying, Shrinking and Splicing for \\Minimum Path Cover in Parameterized Linear Time
\thanks{This work was partially funded by the European Research Council (ERC) under the European Union's Horizon 2020 research and innovation programme (grant agreement No.~851093, SAFEBIO), the Academy of Finland (grants No.~322595, 328877), the US Fulbright program, the Fulbright Finland Foundation, the Helsinki Institute for Information Technology (HIIT), as well as the US National Science Foundation (grant DBI-1759522).}
}
\author[1]{Manuel Cáceres}
\author[1]{Massimo Cairo}
\author[2]{Brendan Mumey}
\author[3]{Romeo Rizzi}
\author[1]{Alexandru~I.~Tomescu}
\affil[1]{\small Department of Computer Science, University of Helsinki, Finland, \texttt{\{manuel.caceresreyes,{alexandru.tomescu\}@helsinki.fi}}}
\affil[2]{\small School of Computer Science, Montana State University, USA, \texttt{brendan.mumey@montana.edu}}
\affil[3]{\small Department of Computer Science, University of Verona, Italy, \texttt{romeo.rizzi@univr.it}}
\date{}
\begin{document}

\maketitle

\begin{abstract}
A \emph{minimum path cover} (MPC) of a directed acyclic graph (DAG) $G = (V,E)$ is a minimum-size set of paths that together cover all the vertices of the DAG. Computing an MPC is a basic polynomial problem, dating back to Dilworth's and Fulkerson's results in the 1950s. Since the size $k$ of an MPC (also known as the \emph{width}) can be small in practical applications, research has also studied algorithms whose complexity is parameterized on $k$.

 We obtain two new MPC parameterized algorithms for DAGs running in time $O(k^2|V|\log{|V|} + |E|)$ and $O(k^3|V| + |E|)$. We also obtain a parallel algorithm running in $O(k^2|V| + |E|)$ parallel steps and using $O(\log{|V|})$ processors (in the PRAM model). Our latter two algorithms are the first solving the problem in parameterized linear time. Finally, we present an algorithm running in time $O(k^2|V|)$ for transforming any MPC to another MPC using less than $2|V|$ distinct edges, which we prove to be asymptotically tight. As such, we also obtain edge sparsification algorithms preserving the width of the DAG with the same running time as our MPC algorithms. 
 
 At the core of all our algorithms we interleave the usage of three techniques: \emph{transitive sparsification}, \emph{shrinking} of a path cover, and the \emph{splicing} of a set of paths along a given path.

\end{abstract}
\thispagestyle{empty}

\newpage
\clearpage
\setcounter{page}{1}

\section{Introduction}

A \emph{Minimum Path Cover (MPC)} of a (directed) graph $G = (V, E)$ is a minimum-sized set of paths such that every vertex appears in some path in the set. While computing an MPC is NP-hard in general, it is a classic result, dating back to Dilworth~\cite{dilworth2009decomposition} and Fulkerson~\cite{fulkerson1956note}, that this can be done in polynomial time on directed acyclic graphs (\emph{DAGs}). Computing an MPC of a DAG has applications in various fields. In bioinformatics, it allows efficient solutions to the problems of multi-assembly~\cite{eriksson2008viral,trapnell2010transcript,rizzi2014complexity,chang2015bridger,liu2017strawberry}, perfect phylogeny haplotyping~\cite{bonizzoni2007linear,gramm2007haplotyping}, and alignment to pan-genomes~\cite{makinen2019sparse,ma2021co}. Other examples include scheduling~\cite{colbourn1985minimizing,desrosiers1995time,bunte2009overview,van2016precedence,zhan2016graph,marchal2018parallel}, computational logic~\cite{bova2015model,gajarsky2015fo}, distributed computing~\cite{tomlinson1997monitoring,ikiz2006efficient}, databases~\cite{Jagadish90}, evolutionary computation~\cite{jaskowski2011formal}, program testing~\cite{ntafos1979path}, cryptography~\cite{mackinnon1985optimal}, and programming languages~\cite{kowaluk2008path}. Since in many of these applications the size $k$ (number of paths, also known as \emph{width}) of an MPC is bounded, research has 
also focused in solutions whose complexity is parameterized by $k$. This approach is also related to the line of research ``FPT inside P''~\cite{giannopoulou2017polynomial} of finding natural parameterizations for problems already in P (see also e.g.~\cite{fomin2018fully,koana2021data,abboud2016approximation}).

MPC algorithms can be divided into those based on a reduction to maximum matching~\cite{fulkerson1956note}, and those based on a reduction to minimum flow~\cite{ntafos1979path}. The former compute an MPC of a \emph{transitive} DAG by finding a maximum matching in a bipartite graph with $2|V|$ vertices and $|E|$ edges. Thus, one can compute an MPC of a transitive DAG in time $O(\sqrt{|V|} |E|)$ with the Hopcroft-Karp algorithm \cite{hopcroft1973n}. Further developments of this idea include the $O(k|V|^2)$-time algorithm of Felsner et~al.~\cite{felsner2003recognition}, and the $O(|V|^2 + k\sqrt{k}|V|)$ and $O(\sqrt{|V|}|E| + k\sqrt{k}|V|)$-time algorithms of Chen and Chen~\cite{chen2008efficient,chen2014graph}.

The reduction to minimum flow consists in building a flow network $\flowG$ from $G$, where a global source $s$ and global sink $t$ are added, and  each vertex $v$ of $G$ is split into an edge $(v^{in}, v^{out})$ of $\flowG$ with a demand (lower bound) of one unit of flow (see \Cref{sec:preliminaries} for details). A minimum-valued (integral) flow of $\flowG$ corresponds to an MPC of $G$, which can be obtained by decomposing the flow into paths. This reduction (or similar) has been used several times in the literature to compute an MPC (or similar object)~\cite{ntafos1979path,mohring1985algorithmic,gavril1987algorithms,Jagadish90,ciurea2004sequential,rademaker2012optimal,pijls2013another,marchal2018parallel}, and it is used in the recent $\bestparameterizedsolution$-time solution of M\"akinen et~al.~\cite{makinen2019sparse}. Furthermore, by noting that a path cover of size $|V|$ is always valid (one path per vertex) the problem can be reduced to maximum flow with capacities at most $|V|$ (see for example \cite[Theorem 4.9.1]{bang2008digraphs}) and it can be solved by using maximum flow algorithms outputting integral solutions. As an example, using the Goldberg-Rao algorithm~\cite{goldberg1998beyond} the problem can be solved in time $\widetilde{O}(|E|\min(|E|^{1/2}, |V|^{2/3})+k|E|)$ (the $k|E|$ term is needed for decomposing the flow into an MPC). More recent maximum flow algorithms~\cite{lee2014path,madry2016computing,liu2020faster,kathuria2020unit,van2021minimum,gao2021fully} provide an abundant options of trade-offs, though none of them leads to a parameterized linear-time solution for the MPC problem. Next, we describe our techniques and results.


\paragraph{Sparsification, shrinking and splicing.}
Across our solutions we interleave three techniques.

\emph{Transitive sparsification} consists in the removal of some transitive edges\footnote{Transitive edges are edges whose removal does not disconnect its endpoints.} while preserving the reachability among vertices, and thus the width of the DAG\footnote{Every edge in an MPC removed by a transitive sparsification can be \emph{re-routed} through an alternative path.}. We sparsify the edges to $O(k|V|)$ only, in overall $O(|E|)$ time, obtaining thus a linear dependency on $|E|$ in our running times. Our idea is inspired by the work of Jagadish~\cite{Jagadish90}, which proposed a compressed index for answering reachability queries in constant time: for each vertex $v$ and path $P$ of an MPC, it stores the last vertex in $P$ that reaches $v$ (thus using overall $k|V|$ space). However, three issues arise when trying to apply this idea \emph{inside} an MPC algorithm: (i) it is dependent on an initial MPC (whereas we are trying to compute one), (ii) it can be computed in only $O(k|E|)$ time~\cite{makinen2019sparse}, and (iii) edges in the index are not necessarily in the DAG. We address (i) by using a suboptimal (but yet bounded) path cover whose gradual computation is interleaved with transitive sparsifications, and we address (ii) and (iii) by keeping only $O(k)$ incoming edges per vertex in a \emph{single linear pass} over the edges. 

By \emph{shrinking} we refer to the process of transforming an arbitrary path cover into an MPC. For example, using the flow network $\flowG$ built from the given path cover, we can search for \emph{decrementing} paths, until obtaining a minimum flow corresponding to an MPC. Given an $O(\log{|V|})$ approximation of an MPC, both algorithms of~\cite{felsner2003recognition,makinen2019sparse} shrink this path cover in a separate step. In both of our algorithms, we do not use shrinking as a separate black-box, but instead interleave shrinking steps in the gradual computation of the MPC. Moreover, in the second algorithm we further guide the search for decrementing paths to amortize the search time to parameterized linear time.

Finally, by \emph{splicing} we refer to the general process of reconnecting paths in a path cover so that (after splicing) at least one of them contains a certain path $D$ as a subpath, while working in time proportional to $|D|$. In particular, we show how to perform splicing to apply the changes required by a decrementing path on a flow decomposition for obtaining an MPC (see \Cref{sec:splicing-algorithm}), and also to reconnect paths for reducing the number of edges used by an MPC (see \Cref{sec:edge-thinning}).

\paragraph{A simple divide-and-conquer approach.}

As a first simple example of sparsification and shrinking interleaved inside an MPC algorithm, in \Cref{sec:DandC} we show how these two techniques enable the first divide-and-conquer MPC algorithm.

\begin{restatable}{theorem}{sparsealgorithm}
\label{thm:main-d-c}
Given a DAG $G = (V,E)$ of width $k$, we compute an MPC in time ${O(k^2|V|\log{|V|}+|E|)}$.
\end{restatable}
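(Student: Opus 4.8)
The plan is to use a divide-and-conquer recursion on the vertex set, combining the two advertised tools: transitive sparsification to control the $|E|$ dependency, and shrinking to merge the sub-solutions. First I would split $V$ (respecting a topological order, say the first half and the second half of the topological sort) into $V_1$ and $V_2$, and recursively compute an MPC $\pathcover_1$ of the induced subgraph $G[V_1]$ and an MPC $\pathcover_2$ of $G[V_2]$. Each recursive call has width at most $k$ (the width of an induced subgraph of $G$ cannot exceed the width of $G$, since any antichain in the subgraph is an antichain in $G$), so each $\pathcover_i$ has at most $k$ paths. Concatenating them gives a path cover $\pathcover$ of $G$ of size at most $2k$, but it may use edges that are not in $E$ (when a path of $\pathcover_1$ is glued to a path of $\pathcover_2$) and it is not yet minimum. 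The merge step must (a) legalize these junction edges and (b) shrink $\pathcover$ down to size $k$.

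For step (a), at the interface between $V_1$ and $V_2$ I would run a single transitive-sparsification pass over the edges of $G$ crossing the cut, keeping only $O(k)$ incoming edges per vertex (as described in the \textbf{Sparsification} paragraph); this reduces the crossing edges to $O(k|V|)$ and lets every junction in $\pathcover$ be re-routed along a genuine path of $G$. For step (b), I would build the flow network $\flowG$ from the (sparsified, legalized) path cover $\pathcover$ of value at most $2k$ and repeatedly find decrementing $s$--$t$ paths in the residual network, each decreasing the flow value by one, until reaching the minimum flow of value $k$; decomposing this minimum flow yields an MPC of $G$. Since we start from a flow of value $O(k)$ and must reach value $k$, only $O(k)$ decrementing-path searches are needed at this level, and each search runs in time linear in the current number of edges, which is $O(k|V|)$ after sparsification — so the merge costs $O(k^2|V|)$ plus the one-time $O(|E|)$ for the initial sparsification of all of $E$ (done once at the top, or charged so that each edge is touched $O(1)$ times across the recursion).

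Putting the recurrence together: letting $T(n)$ be the time to solve an $n$-vertex instance (after the edges incident to it have already been sparsified to $O(kn)$), we get $T(n) = 2T(n/2) + O(k^2 n)$, which solves to $T(|V|) = O(k^2 |V| \log |V|)$; adding the single global sparsification pass over $E$ gives the claimed $O(k^2|V|\log|V| + |E|)$. The main obstacle I anticipate is making the sparsification bookkeeping clean across recursion levels: I must ensure that after splitting, each subproblem inherits an $O(k|V_i|)$-edge graph without paying $O(|E|)$ again at every node of the recursion tree, and that re-routing junction edges through the sparsified crossing edges is always possible — this is exactly where the argument that transitive sparsification preserves reachability (and hence width) is needed, and where I would have to be careful that the "last vertex on path $P$ reaching $v$" style of sparsifier is well-defined with respect to the suboptimal cover $\pathcover$ rather than an unknown MPC. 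A secondary subtlety is verifying that decrementing-path search on $\flowG$ built from $\pathcover$ indeed certifies minimality once no decrementing path exists, i.e.\ that the min-flow/MPC correspondence goes through for the path-cover-derived network; this is standard but should be stated explicitly.
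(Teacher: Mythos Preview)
Your divide-and-conquer skeleton matches the paper's, but the merge step is muddled in a way that matters. There is no need to ``glue'' paths from $\pathcover_1$ to paths from $\pathcover_2$: the disjoint union $\pathcover_1 \cup \pathcover_2$ is already a legitimate path cover of $G$ (every vertex lies in some path), of size at most $2k$, and it uses only edges of $E$. So your step~(a), ``legalizing phantom junction edges'', is solving a non-problem. What sparsification is actually for at the merge is different: after the recursion, the graph on which you must shrink is $G' = G'_\ell \cup G'_r \cup E_{\ell r}$, where $E_{\ell r}$ are the original edges crossing from $V_1$ to $V_2$; you run the sparsifier (\Cref{obs:sparsificationVertex}/\Cref{sparsification_algorithm}) on $G'$ with respect to the cover $\pathcover_1 \cup \pathcover_2$ to cut the edge set down to $\le 2k|V|$, and only then shrink via \Cref{result:shrinking}. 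Crucially, each recursive call returns both an MPC \emph{and} a sparsified subgraph with $O(k|V_i|)$ edges; that is what lets the subproblems inherit a small edge set.

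Your alternative of a ``single global sparsification pass at the top'' cannot work, because the sparsifier needs a path cover as input and you have none before recursing. The correct accounting (which you gesture at but do not nail down) is that every edge of $E$ appears in $E_{\ell r}$ at exactly one node of the recursion tree, so the total time touching original edges is $O(|E|)$; the remaining $O(k^2|V|)$ per level over $O(\log|V|)$ levels gives the bound. Finally, your justification that the halves have width at most $k$ (``any antichain in the subgraph is an antichain in $G$'') is false for arbitrary induced subgraphs --- take a path $a\to b\to c$ and the induced subgraph on $\{a,c\}$. It holds here only because the halves are contiguous intervals of a topological order, so any path in $G$ between two vertices of the interval stays inside it; this is precisely \Cref{topologicalDoNotIncreaseWidth}, and you should invoke it rather than the general (false) claim.
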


\Cref{thm:main-d-c} works by splitting a topological ordering of the vertices in half, and recursing in each half. When combining the MPCs from the two halves, we need to (i)~account for the new edges between the two parts (here we exploit sparsification), and (ii)~efficiently combine the two partial path covers into one for the entire graph (and here we use shrinking). Since this divides the problem in disjoint subgraphs, we also obtain the first linear-time parameterized parallel algorithm.

\begin{restatable}{theorem}{parallelsparsealgorithm}
\label{thm:parallel-d-c}
Given a DAG $G = (V,E)$ of width $k$, we compute an MPC in $O(k^2|V|+|E|)$ parallel steps using $O(\log{|V|})$ single processors in the PRAM model~\cite{wyllie1979complexity}.
\end{restatable}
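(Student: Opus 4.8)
The plan is to derive \Cref{thm:parallel-d-c} as a direct parallelization of the divide-and-conquer algorithm behind \Cref{thm:main-d-c}. First I would observe that the recursion splits the topologically sorted vertex set into two halves whose induced subgraphs are \emph{vertex-disjoint}, so the two recursive calls touch disjoint data and can be executed in parallel. Unrolling the recursion gives a balanced binary tree of depth $O(\log|V|)$; the parallel algorithm assigns one processor to each recursion subtree, forking at every level, so that at the leaves we have up to $|V|$ independent trivial instances, and we only ever keep $O(\log|V|)$ processors active along any root-to-leaf path in the PRAM sense (using the Wyllie-style fork model of~\cite{wyllie1979complexity}). Each leaf is an MPC on a single vertex, computed in $O(1)$ time.

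Next I would account for the parallel time. The work done at a node of the recursion tree handling a subinterval of $m$ vertices is dominated by (i) the transitive sparsification of the new cross edges between its two children, and (ii) the shrinking step that merges the two child path covers into one for the combined interval. By the analysis underlying \Cref{thm:main-d-c}, the sparsification at that node costs time proportional to the number of cross edges it processes, and the shrinking costs $O(k^2 m)$ in the worst case (the $O(k^2 m \log m)$ bound in \Cref{thm:main-d-c} comes from summing the $O(k^2 m)$ merge cost over the $O(\log m)$ levels along a single branch of the sequential recursion). In the parallel model, the merges at a fixed level of the tree run simultaneously, so the $\log$ factor disappears from the \emph{time} (though not from the total work): level $i$ has nodes handling intervals of size $|V|/2^i$, and the slowest such node finishes in $O(k^2 |V|/2^i)$ steps; summing the geometric series over the $O(\log|V|)$ levels yields $O(k^2|V|)$ parallel steps for all the shrinking. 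The sparsification contributes $O(|E|)$ in total because each edge is examined $O(1)$ times across the whole recursion (an edge between intervals that first become siblings at some level is handled only there), and in the parallel schedule the edges handled at a given level are partitioned among independent nodes, so this also telescopes to $O(|E|)$ parallel steps overall. Adding the two gives the claimed $O(k^2|V|+|E|)$ parallel step count with $O(\log|V|)$ processors.

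The main obstacle I anticipate is making the edge-handling bookkeeping rigorous in the PRAM model: we must argue that the input edge list can be distributed to the recursion-tree nodes that need them without incurring extra parallel overhead, i.e. that routing each edge $(u,v)$ to the unique tree node where the intervals containing $u$ and $v$ become siblings can be done within the stated budget, and that the $O(|E|)$ term genuinely parallelizes rather than forcing a single processor to scan all edges. I would handle this by a preprocessing phase that, for each edge, computes in $O(1)$ the level at which its endpoints' blocks merge (a function of the highest differing bit of their positions in the topological order) and appends the edge to that node's local list; with $O(\log|V|)$ processors this distribution is itself $O(|E|)$ steps, and thereafter each node works only with its own local edges, so no single processor ever serializes the full edge set. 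A secondary, smaller point to check is that the shrinking subroutine used at each merge is itself implementable in $O(k^2 m)$ \emph{time} on a single processor for an $m$-vertex subinstance (not merely amortized), which follows from the same argument that establishes \Cref{thm:main-d-c}; with that in hand the level-by-level summation above goes through and the theorem follows.
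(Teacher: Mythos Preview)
Your parallelization has a processor-count gap. You unroll the recursion all the way to single-vertex leaves and then argue that at level $i$ the slowest node finishes in $O(k^2|V|/2^i)$ steps, summing a geometric series to $O(k^2|V|)$. But that per-level time bound assumes all $2^i$ nodes at level $i$ execute simultaneously, which requires $2^i$ processors; at the deepest levels this is $\Theta(|V|)$, not $O(\log|V|)$. The phrase ``$O(\log|V|)$ processors active along any root-to-leaf path'' does not bound the total processor count---it is trivially true of any balanced binary recursion of depth $O(\log|V|)$ and says nothing about how many processors are alive at a fixed level.

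The paper's proof avoids this by \emph{not} recursing to singletons in parallel. It subdivides the topological ordering into only $O(\log|V|)$ contiguous pieces (stopping the parallel fork once each piece has $O(|V|/\log|V|)$ vertices), hands each piece to one processor, and runs the \emph{sequential} divide-and-conquer algorithm of \Cref{thm:main-d-c} on that piece. A piece of size $m=|V|/\log|V|$ then costs $O(k^2 m\log m)=O(k^2|V|)$ sequentially, so this phase takes $O(k^2|V|+|E|)$ parallel steps with exactly $O(\log|V|)$ processors. The remaining $O(\log\log|V|)$ merge levels (combining $O(\log|V|)$ pieces back into one) are executed level by level; at every such level there are at most $O(\log|V|)$ nodes, so the processor bound is respected, and the geometric-series argument you give now applies cleanly to bound all merges by $O(k^2|V|+|E|)$ parallel steps. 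The edge-routing issue you flag is real but secondary; the missing ingredient is cutting off the parallel subdivision early and absorbing the residual $\log|V|$ factor into the sequential work each of the $O(\log|V|)$ processors performs on its own piece.
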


\paragraph{The first linear-time parameterized algorithm.}
Our second algorithm works on top of the minimum flow reduction, but instead of running a minimum flow algorithm and then extracting the corresponding paths (as previous approaches do~\cite{ntafos1979path,mohring1985algorithmic,Jagadish90,ciurea2004sequential,rademaker2012optimal,pijls2013another,marchal2018parallel,makinen2019sparse}), it processes the vertices in topological order, and incrementally maintains an MPC (i.e.~a flow decomposition) $\pathcover$ of the corresponding induced subgraph. When a new vertex $v$ is processed, $\pathcover$ is used to sparsify the edges incoming to $v$ to at most $k$ (see \Cref{sec:progressive-flows}). After that, the path cover $\pathcover \cup \{(v)\}$ is shrunk by searching for a single decrementing path in the corresponding residual graph. The search is guided by assigning an integer level to each vertex. We amortize the time of performing all the searches to $O(k^3)$ time per vertex, thus obtaining the final $O(k^3|V|+|E|)$ running time. 

\begin{restatable}{theorem}{mpcflow}
\label{thm:mpc-flow}
Given a DAG $G = (V,E)$ of width $k$, we compute an MPC in time ${O(k^3|V|+|E|)}$.
\end{restatable}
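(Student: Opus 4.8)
The plan is to build the MPC incrementally by processing vertices in a fixed topological order $v_1, \dots, v_n$, maintaining at each step an MPC $\pathcover$ of the subgraph $G_i = G[\{v_1, \dots, v_i\}]$ together with its encoding as an integral flow in the minimum-flow network $\flowG_i$ built from $G_i$. When $v_{i+1}$ is added, $G_{i+1}$ differs from $G_i$ only by the new vertex and its incoming edges (since the ordering is topological, $v_{i+1}$ has no outgoing edges inside $G_{i+1}$). The first step is to apply transitive sparsification to the incoming edges of $v_{i+1}$: using the current path cover $\pathcover$ (of size at most $k$, an upper bound that only increases to $k+1$ momentarily), a single linear pass over the in-edges of $v_{i+1}$ keeps, for each of the $\le k$ paths, only the last in-neighbour along that path that still reaches $v_{i+1}$ — reducing $\deg^-(v_{i+1})$ to at most $k$ while preserving reachability, hence width. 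Summed over all vertices this costs $O(|E|)$ for the passes plus $O(k|V|)$ for the retained edges, consistent with the $|E|$ term in the bound.

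Next, I would add $v_{i+1}$ as its own singleton path, so $\pathcover \cup \{(v_{i+1})\}$ is a (possibly suboptimal) path cover of $G_{i+1}$ of size $\le k+1$, corresponding to a feasible but not necessarily minimum flow. The shrinking step then searches for a single decrementing path in the residual network $\residual{\flowG_{i+1}}{f}$; if one is found, pushing flow along it reduces the flow value by one, and by Cref-style minimum-flow theory (the analogue of augmenting-path optimality) at most one such path can exist, after which $\pathcover$ is again an MPC of $G_{i+1}$ and its width is unchanged from $G_i$, or has increased by exactly one. The delicate point is implementing this search and the subsequent update of the flow decomposition fast enough: a naive BFS costs $O(|E_{i+1}|)$, far too slow. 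Here I would exploit the splicing technique of \Cref{sec:splicing-algorithm} to apply the decrementing path's changes to the actual list-of-paths representation in time proportional to the length of the decrementing path $D$, not to the size of the graph, and crucially restrict the search itself using an integer level assigned to each vertex — levels that are maintained across iterations so that a search starting at $v_{i+1}$ only explores a bounded neighbourhood. The amortized analysis should charge the total search and splicing work to $O(k^3)$ per processed vertex.

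The main obstacle, and the part that needs the most care, is precisely this amortization: showing that maintaining the vertex levels and guiding the decrementing-path search keeps the cumulative cost of all $|V|$ searches at $O(k^3|V|)$ rather than something like $O(k|V|^2)$. This requires a potential-function argument tying increases in search cost to decreases in total level (or some similar monotone quantity bounded in terms of $k$ and $|V|$), together with the guarantee that the sparsified graph has only $O(k)$ edges per vertex so that each "local" exploration step is itself cheap. I would also need to verify that splicing correctly preserves the invariant that $\pathcover$ is a valid flow decomposition after each update — i.e., that reconnecting the $O(k)$ affected paths along $D$ yields exactly the path cover corresponding to the new flow — and that this can be done within the claimed time. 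Once the per-vertex amortized bound of $O(k^3)$ is established and combined with the $O(|E|)$ total sparsification cost, the overall $O(k^3|V| + |E|)$ running time follows, and correctness follows from the minimum-flow/MPC correspondence applied at the final step $i = n$.
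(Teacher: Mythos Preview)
Your high-level plan matches the paper's approach almost exactly: incremental processing in topological order, per-vertex sparsification of in-edges to at most $k$, adding $v_{i+1}$ as a singleton path, one decrementing-path search in the residual, and splicing to update the path decomposition. You also correctly identify that the crux is the amortized bound on the searches and that integer vertex levels are the mechanism.

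However, the amortization is not a detail you can defer: it is the entire content of the theorem, and your proposal leaves it as a wish (``should charge'', ``requires a potential-function argument''). The paper's actual argument is quite specific and not just a generic potential. Levels $\ell:\flowV_i\to\{0,\dots,\width(G_i)\}$ are maintained under three invariants: (A) residual edges never increase level, (B) the last edge $(u^{in},u^{out})$ of each flow path strictly increases level, and (C) the sets $\pathcover_i^{\ge l}=\{P\in\pathcover_i:\ell(P)\ge l\}$ are \emph{strictly} decreasing in $l$. The search is a \emph{layered} BFS from the highest reachable layer downward; if the lowest layer touched is $l$, all visited vertices are reset to level $l$ and $v_{i+1}^{out}$ gets level $l+1$, with a possible ``merge'' of layer $l$ to restore (C). The cost of a step is $O(k)$ per vertex of level $\ge l$ (this uses the sparsification to bound residual out-degree by $O(k)$). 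The amortization then fixes a vertex $u$ and tracks the strictly decreasing sequence $(|\pathcover_i^{\ge 1}|,\dots,|\pathcover_i^{\ge \ell(u)+1}|)$: each step that charges $u$ increments exactly one entry of this sequence by one (and may truncate it), and since the sequence is strictly decreasing with first term $\le k$, it can be updated at most $O(k^2)$ times. That gives $O(k^2)$ charges of $O(k)$ each per vertex. Without invariants (A)--(C), the layered traversal, and this counting argument, there is no proof; your sketch does not supply any of them.
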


The amortization is achieved by guiding the search through the assignment of integer levels to the vertices, which allows to perform the traversal in a \emph{layered} manner, from the vertices of largest level to vertices of smallest level (see \Cref{sec:layered-traversal}). If a decrementing path $D$ is found, $\pathcover \cup \{(v)\}$ is updated by splicing it along $D$ (see \Cref{sec:splicing-algorithm}).

An \emph{antichain} is a set of pairwise non-reachable vertices, and it is a well-known result, due to Dilworth~\cite{dilworth2009decomposition}, that the maximum size of an antichain equals the size of an MPC. Our level assignment defines a series of size-decreasing one-way cuts (\Cref{lemma:optimality-by-invariant-a}). Moreover, by noting that these cuts in the network correspond to antichains (see e.g. \cite{pijls2013another}), the levels implicitly maintain a structure of antichains that \emph{sweep} the graph during the algorithm. The high-level idea of maintaining a collection of antichains has been used previously by Felsner et~al.~\cite{felsner2003recognition} and C\'aceres et~al.~\cite{caceres2021a} for the related problem of computing a maximum antichain. However, apart from being restricted to this related problem, these two approaches have intrinsic limitations. More precisely, Felsner et~al.~\cite{felsner2003recognition} maintain a \emph{tower of right-most antichains} for \emph{transitive} DAGs and $k \leq 4$, mentioning that ``the case $k=5$ already seems to require an unpleasantly involved case analysis''~\cite[p.~359]{felsner2003recognition}. C\'aceres et~al.~\cite{caceres2021a} overcome this by maintaining $O(2^k)$ many \emph{frontier antichains}, and obtaining a linear-time parameterized $O(k^24^k|V| + k2^k|E|)$-time maximum antichain algorithm.

Based on the relation between maximum one-way cuts in the minimum flow reduction and maximum antichains in the original DAG (see for example~\cite{mohring1985algorithmic,pijls2013another,marchal2018parallel}), we obtain algorithms computing a maximum antichain from any of our existing algorithms, preserving their running times (see \Cref{thm:main-antichain}). In particular, by using our second algorithm we obtain an exponential improvement on the function of $k$ of the algorithm of C\'aceres et~al.~\cite{caceres2021a}.


\paragraph{Edge sparsification in parameterized linear time.} Our last result in \Cref{sec:edge-thinning} is a structural result concerning the problem of \emph{edge sparsification} preserving the width of the DAG. Edge sparsification is a general concept that consists in finding spanning subgraphs (usually with significantly less edges) while (approximately) preserving certain property of the graph. For example, \emph{spanners} are distance preserving (up to multiplicative factors) sparsifiers, and it is a well-known result that $(1+\epsilon)$ cut sparsifiers can be computed efficiently~\cite{benczur1996approximating}. We show that if the property we want to maintain is the (exact) width of a DAG, then its edges can be sparsified to less than~$2|V|$. Moreover, we show that such sparsification is asymptotically tight (\Cref{remark:2-tight}), and it can be computed in $O(k^2|V|)$ time if an MPC is given as additional input. Therefore, by using our second algorithm we obtain the following result.

\begin{corollary}
\label{cor:main-sparsification}
Given a DAG $G = (V,E)$ of width $k$, we compute a spanning subgraph $G' = (V, E')$ of $G$ with $|E'| < 2|V|$ and width $k$ in time ${O(k^3|V|+|E|)}$.
\end{corollary}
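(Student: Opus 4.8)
The plan is to simply combine the two main ingredients developed earlier in the paper. First, I would invoke \Cref{thm:mpc-flow} to compute an MPC $\pathcover$ of $G$ of size $k$ in time $O(k^3|V|+|E|)$; note that this also yields the width $k$ explicitly as $k = |\pathcover|$. Then I would feed $G$ together with $\pathcover$ into the edge sparsification algorithm promised in \Cref{sec:edge-thinning}, which (as stated in the paragraph on edge sparsification) runs in time $O(k^2|V|)$ when an MPC is supplied as additional input, and outputs a spanning subgraph $G' = (V, E')$ with $|E'| < 2|V|$ whose width is still $k$. Adding the two running times, $O(k^3|V|+|E|) + O(k^2|V|) = O(k^3|V|+|E|)$, which is exactly the claimed bound, since the $k^2|V|$ term is dominated by the $k^3|V|$ term.

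The only points that require a word of justification are the following. (i) The sparsification algorithm of \Cref{sec:edge-thinning} does not merely sparsify the \emph{edges used by $\pathcover$}; rather, starting from an arbitrary MPC it produces a global spanning subgraph preserving the width — this is precisely the structural statement summarized just before the corollary, namely that the edges of any DAG can be sparsified to fewer than $2|V|$ while keeping the exact width, and it is realized in time $O(k^2|V|)$ given an MPC. (ii) Width preservation is two-sided: $G'$ is a subgraph of $G$, so its width is at least that of $G$ (removing edges can only increase the minimum number of paths needed, and never decreases the maximum antichain); conversely, the construction guarantees that reachability among the necessary pairs is preserved (transitive edges are re-routed, exactly as in the transitive sparsification discussion), so the width does not increase. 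Hence $\width(G') = \width(G) = k$. (iii) The vertex set is untouched, so $G'$ is genuinely a spanning subgraph of $G$.

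The main obstacle is not in this corollary at all — it is deferred to \Cref{sec:edge-thinning}, where one must actually exhibit the $O(k^2|V|)$-time algorithm that turns an MPC into one using fewer than $2|V|$ distinct edges and argue its correctness (this is the content of the theorem on transforming an MPC to use fewer than $2|V|$ edges, together with its splicing-based implementation). Once that theorem is in hand, the present corollary is an immediate consequence obtained by chaining it after \Cref{thm:mpc-flow}, with only the trivial running-time addition and the two-line width-preservation argument above.
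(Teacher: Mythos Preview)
Your approach is exactly the paper's: chain \Cref{thm:mpc-flow} with \Cref{thm:edge-thinning} and take $E'$ to be the distinct edges of the resulting path cover $\pathcover'$. The running-time arithmetic is fine.

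One correction to your justification in point~(ii): the algorithm of \Cref{sec:edge-thinning} is \emph{not} a transitive sparsification and does \emph{not} preserve reachability in general (edges outside the support of $\pathcover'$ are simply dropped, with no re-routing). The correct, and simpler, argument for $\width(G')\le k$ is that $\pathcover'$ itself is a path cover of $G'$ of size $k$ (every edge of every path in $\pathcover'$ lies in $E'=E_{\pathcover'}$ by construction). Combined with your correct observation that $\width(G')\ge\width(G)=k$ since $G'$ is a spanning subgraph, this gives $\width(G')=k$.
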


The main ingredient to obtain this result is an algorithm for transforming any path cover into one of the same size using less than~$2|V|$ distinct edges, a surprising structural result.

\begin{restatable}{theorem}{edgethinning}
\label{thm:edge-thinning}
Let $G = (V,E)$ be a DAG, and let $\pathcover, |\pathcover|=t$ be a path cover of $G$. Then, we compute, in $O(t^2|V|)$ time, a path cover $\pathcover',|\pathcover'|=t$, whose number of \emph{distinct} edges is less than $2|V|$.
\end{restatable}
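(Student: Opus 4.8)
\emph{Overall plan.} The idea is to view $\pathcover$ as an integral flow decomposition in the standard network $\flowG$ and to observe that the \emph{distinct edges used by a path cover} are precisely the original arcs carrying positive flow. Hence it suffices to re-route the flow --- preserving its value $t$ and the lower bound $1$ on every split-arc, hence its decomposability into $t$ paths covering $V$ --- until its support among original arcs drops below $2|V|$, and then re-decompose it into $\pathcover'$.

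\emph{The sparse configuration (structural core).} To the current flow associate the bipartite ``tail--head'' graph $B$: a left node for every vertex with a used out-arc, a right node for every vertex with a used in-arc, and, for each used original arc $(u,w)$, the edge $\{\text{left-}u,\text{right-}w\}$. Every step of a cycle of $B$ is a used original arc, so in $\flowG$ such a cycle is an undirected cycle that arrives at each $w^{in}$ along two \emph{in}-arcs and leaves each $u^{out}$ along two \emph{out}-arcs; consequently, pushing $\delta$ units of flow around it changes \emph{no} vertex's throughput, so all lower bounds and the flow value are untouched, while choosing $\delta$ to be the smallest flow among the ``backward'' arcs of the cycle drives one of them to zero. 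Iterating this while $B$ has a cycle turns $B$ into a forest; since a source of $G$ contributes no used in-arc, $B$ has at most $2|V|-1$ nodes and therefore fewer than $2|V|$ edges, which is exactly the number of distinct edges of the path cover obtained by re-decomposing.

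\emph{Making it run in $O(t^2|V|)$.} Rather than recomputing $B$, sweep the vertices in topological order, maintaining a decomposition (path cover) whose tail--head graph on the vertices seen so far is a forest. When $v$ is reached, insert its currently-used in-arcs one at a time (each comes from an already-seen vertex); each insertion closes at most one cycle, passing through the fresh right node of $v$, which we destroy with a single circulation --- realised on the decomposition by re-splicing, along the cycle, the paths that traverse its arcs, using the splicing procedure of \Cref{sec:splicing-algorithm}. A circulation never creates a new used arc, so the support only shrinks; charging each circulation to the arc it deletes, and noting that only $O(t|V|)$ arc-insertions ever occur, bounds the number of circulations by $O(t|V|)$.

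\emph{Main obstacle.} The structural claim is, once the right picture is set up (the graph $B$ together with the observation that a tail--head cycle is a throughput-neutral circulation), essentially immediate. The real work is the running time: one must guarantee that each repair --- locating the cycle closed by a newly inserted arc and performing the splicing surgery --- costs only $O(t)$ amortised. This requires controlling the length of (equivalently, the work charged to) the repair cycles and choosing judiciously which arc of a cycle to cancel so that the re-splicing stays local; this is the step I expect to demand the most care, and where the precise accounting behind the $O(t^2|V|)$ bound lives.
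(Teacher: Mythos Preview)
Your structural argument is correct and in fact cleaner than the paper's. The bipartite tail--head graph $B$ is a good abstraction: a cycle in $B$ is automatically a circulation in $\flowG$ that avoids every split-arc $(v^{in},v^{out})$, so coverage is preserved for free, and once $B$ is a forest the bound $|E(B)|<2|V|$ is immediate. The paper instead works directly in the support graph $G'=(V,E_{\pathcover'})$, colours vertices \emph{red} if $\deg_{G'}(v)\ge 3$ and \emph{blue} otherwise, and only cancels cycles made of red vertices (this restriction is exactly what guarantees coverage survives the deletion of an arc). Its final count then has three pieces (blue paths/cycles, a red forest, and purple connectors) rather than your one-line forest bound.

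The genuine gap is the running time, which you yourself flag. Your proposed topological sweep with $O(t)$ amortised cost per repair is not how the paper does it, and I do not see how to complete it as stated: the cycle closed by a freshly inserted arc can have length $\Theta(|V|)$, and you give no rule for which arc to cancel that would keep subsequent cycles short. The paper's mechanism is quite different and is the missing idea. For each cycle $C$ it partitions the arcs into forward $F$ and backward $B$, chooses the side with larger total multiplicity to be $F$, and pushes one unit at a time (increase $F$, decrease $B$). The potential $\Phi=\sum_{e}\mu(e)^2$ then \emph{increases} by at least $|C|$ per push, while $\Phi\le t^{2}\cdot|E_{\pathcover'}|\le 2t^{2}|V|$ throughout, so the total splicing work is $O(t^{2}|V|)$. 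A single DFS, restarted locally after each cancellation, finds the cycles in $O(t|V|)$ extra time. This potential argument plugs directly into your bipartite framework as well; it, and not an incremental sweep, is what makes the bound go through.
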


We obtain \Cref{cor:main-sparsification} by using \Cref{thm:edge-thinning} with an MPC and defining $E'$ as the edges in $\pathcover'$. Our approach adapts the techniques used by Schrijver~\cite{schrijver1998bipartite} for finding a perfect matching in a regular bipartite graph. In our algorithm, we repeatedly search for undirected cycles~$C$ of edges joining vertices of high degree (in the graph induced by the path cover), and splice paths along $C$ (according to the \emph{multiplicty} of the edges of $C$) to remove edges from the path cover.

\paragraph{Paper structure.} \Cref{sec:preliminaries} presents basic concepts, the main preliminary results needed to understand the technical content of this paper, and results related to the three common techniques used in latter sections\footnote{We include the full version of this section in \Cref{sec:extended-preliminaries} for completeness.}. \Cref{sec:DandC,sec:progressive-flows} present our $O(k^2|V|\log{|V|}+|E|)$ and $O(k^3|V|+|E|)$ time algorithms for MPC, respectively\footnote{In \Cref{sec:antichain-structure} we show that our second algorithm implicitly maintains a structure of antichains.}. \Cref{sec:edge-thinning} presents the algorithm of \Cref{thm:edge-thinning}. Omitted proofs can be found in the Appendices.

\section{Preliminaries} \label{sec:preliminaries}

\paragraph{Basics.} We denote by $N^+(v)$ ($N^-(v)$) the set of out-neighbors (in-neighbors) of $v$, and by $I^+(v)$ ($I^-(v)$) the edges outgoing (incoming) from (to) $v$. A graph $S = (V_S, E_S)$ is said to be a \emph{subgraph} of $G$ if $V_S \subseteq V$ and $E_S \subseteq E$. If $V_S = V$ it is called \emph{spanning} subgraph. If $V' \subseteq V$, then $G[V']$ is the subgraph of $G$ \emph{induced} by $V'$, defined as $G[V'] = (V', E_{V'})$, where $E_{V'} = \{(u, v) \in E ~:~ u,v \in V'\}$. A \emph{directed acyclic graph} (DAG) is a directed graph without proper cycles. A \emph{topological ordering} of a DAG is a total order of $V$, $v_1,\ldots, v_{|V|}$, such that for all $(v_{i}, v_{j}) \in E$, $i < j$. A topological ordering can be computed in $O(|V|+|E|)$ time~\cite{kahn1962topological,tarjan1976edge}. If there exists a path from $u$ to $v$, then it is said that $u$ \emph{reaches} $v$. The \emph{multiplicity} of an edge $e\in E$ with respect to a set of paths $\pathcover$, $\mu_{\pathcover}(e)$ (only $\mu(e)$ if $\pathcover$ is clear from the context), is defined as the number of paths in $\pathcover$ that contain $e$, $\mu_{\pathcover}(e) = |\{P\in\pathcover\mid e \in P\}|$. The \emph{width} of a graph $G$, $\width(G)$, is the size of an MPC of $G$. We will work with subgraphs induced by a consecutive subsequence of vertices in a topological ordering. The following lemma shows that we can bound the width of these subgraphs by $k = \width(G)$.

\begin{restatable}[\cite{caceres2021a}]{lemma}{topologicalDoNotIncreaseWidth}
    \label{topologicalDoNotIncreaseWidth}
    Let $G = (V, E)$ be a DAG, and $v_1, \ldots , v_{|V|}$ a topological ordering of its vertices. Then, for all $i, j \in [1\ldots |V|], i \le j$, $\width(G_{i,j}) \le \width(G)$, with $G_{i,j} := G[\{v_{i}, \ldots , v_{j}\}]$.
\end{restatable}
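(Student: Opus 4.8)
The plan is to obtain a path cover of $G_{i,j}$ of size at most $\width(G)$ by \emph{restricting} an MPC of $G$ to the interval $\{v_i,\ldots,v_j\}$, exploiting the fact that along any directed path the indices in a topological ordering strictly increase. Let $\pathcover$ be an MPC of $G$, so $|\pathcover| = \width(G) =: k$, and fix a path $P = u_1, u_2, \ldots, u_\ell \in \pathcover$. Since $v_1,\ldots,v_{|V|}$ is a topological ordering, the indices of $u_1,\ldots,u_\ell$ in this ordering are strictly increasing along $P$; hence the set of vertices of $P$ whose index lies in $[i\ldots j]$ forms a contiguous block $u_a, u_{a+1}, \ldots, u_b$ (possibly empty). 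For every $a \le t < b$ the edge $(u_t, u_{t+1})$ is in $E$ and has both endpoints in $\{v_i,\ldots,v_j\}$, so it is an edge of $G_{i,j}$; therefore the subpath from $u_a$ to $u_b$ is a path in $G_{i,j}$. Applying this to every $P \in \pathcover$ and discarding the empty restrictions yields a set $\pathcover'$ of at most $k$ paths in $G_{i,j}$.

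The second step is to verify $\pathcover'$ covers $V(G_{i,j})$. Any vertex $v_m$ with $i \le m \le j$ is covered by some $P \in \pathcover$; since the index $m$ of $v_m$ lies in $[i\ldots j]$, the vertex $v_m$ is kept when $P$ is restricted, so it is covered by the corresponding path of $\pathcover'$. Hence $\pathcover'$ is a path cover of $G_{i,j}$ with $|\pathcover'| \le k$, which gives $\width(G_{i,j}) \le k = \width(G)$.

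Alternatively, one can argue via Dilworth's theorem: the same index-monotonicity observation shows that any path of $G$ between two vertices of $\{v_i,\ldots,v_j\}$ stays entirely within $\{v_i,\ldots,v_j\}$, so reachability in $G_{i,j}$ is exactly the restriction of reachability in $G$ to that vertex set; consequently every antichain of $G_{i,j}$ is an antichain of $G$, and since $\width$ equals the maximum antichain size, the bound follows. I expect no serious obstacle here: the only point needing care is the observation that a directed path cannot leave and re-enter a contiguous interval of a topological order (equivalently, that indices strictly increase along directed paths), after which both the path-cover restriction and the antichain arguments are immediate.
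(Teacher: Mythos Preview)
Your proof is correct. The paper does not actually include a proof of this lemma; it merely attributes the result to C\'aceres et~al.~\cite{caceres2021a}. Both of your arguments---restricting each path of an MPC to the interval, or observing via Dilworth that every antichain of $G_{i,j}$ is an antichain of $G$---are valid and standard, and you correctly isolate the one nontrivial point: along any directed path the topological indices strictly increase, so the intersection of a path with a contiguous interval $\{v_i,\ldots,v_j\}$ is itself a (possibly empty) subpath.
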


\paragraph{Minimum Flow.}
Given a (directed) graph $G = (V, E)$, a source $s \in V$, a sink $t \in V$, and a function of \emph{lower bounds} or \emph{demands} on its edges $d: E \rightarrow \Nzero$, an $st$-\emph{flow} (or just \emph{flow} when $s$ and $t$ are clear from the context) is a function on the edges $f: E \rightarrow \Nzero$, satisfying $f(e) \ge d(e)$ for all $e \in E$ ($f$ \emph{satisfies the demands}) and $\sum_{e \in I^-(v)} f(e) = \sum_{e \in I^+(v)} f(e)$ for all $v \in V \setminus \{s,t\}$ (\emph{flow conservation}). If a flow exists, the tuple $(G, s, t, d)$ is said to be a \emph{flow network}. The \emph{size} of $f$ is the net amount of flow exiting $s$, formally $|f| = \sum_{e \in I^+(s)} f(e) - \sum_{e \in I^-(s)} f(e)$. An $st$-\emph{cut} (or just \emph{cut} when $s$ and $t$ are clear from the context) is a partition $(S, T)$ of $V$ such that $s \in S$ and $t \in T$. An edge $(u, v)$ \emph{crosses} the cut $(S, T)$ if $u\in S$ and $v \in T$, or vice versa. If there are no edges \emph{crossing} the cut from $T$ to $S$, that is, if $\{(u,v) \in E \mid u \in T, v \in S\} = \emptyset$, then $(S, T)$ is a \emph{one-way cut} (ow-cut). The \emph{demand} of an ow-cut is the sum of the demands of the edges crossing the cut, formally $d((S,T)) = \sum_{e = (u, v), u \in S, v \in T} d(e)$. An ow-cut whose demand is maximum among the demands of all ow-cuts is a \emph{maximum ow-cut}.

Given a \emph{flow network} $(G, s, t, d)$, the problem of \emph{minimum flow} consists of finding a flow $f^*$ of minimum size $|f^*|$ among the flows of the network, such flow is a \emph{minimum flow}. If a minimum flow exists, then $(G, s, t, d)$ is a \emph{feasible} flow network. It is a known result~\cite{ahujia1993network,ciurea2004sequential,bang2008digraphs} that the demand of a maximum ow-cut equals the size of a minimum flow.

Given a flow $f$ in a feasible flow network $(G, s, t, d)$, the \emph{residual network} of $G$ with respect to $f$ is defined as $\residual{G}{f} = (V, E_f)$ with $E_f = \{(u, v) \mid (v, u) \in E\} \cup \{e \mid f(e) > d(e)\}$, that is, the \emph{reverse edges} of $G$, plus the edges of $G$ on which the flow can be decreased without violating the demands (\emph{direct edges}). Note that a path from $s$ to $t$ in $\residual{G}{f}$ can be used create another flow $f'$ of smaller size by increasing flow on reverse edges and decreasing flow on direct edges of the path, such path it is called \emph{decrementing path}. A flow $f$ is a minimum flow if and only if there is no decrementing path in $\residual{G}{f}$ (see \Cref{sec:min-flow-extended}). A \emph{flow decomposition} of $f$ is a set of $|f|$ paths $\pathcover$ in $G$ such that $f(e) = \mu_{\pathcover}(e)$ for all $e \in E$, in this case it is said that $f$ is the flow \emph{induced} by $\pathcover$. If $\pathcover$ is a flow decomposition of $f$, then the residual network of $G$ with respect to $\pathcover$ is $\residual{G}{f}$.

\paragraph{MPC in DAGs through Minimum Flow.}
The problem of finding an MPC in a DAG $G = (V, E)$  can be solved by a reduction to the problem of minimum flow on an appropriate feasible flow network $(\flowG = (\flowV, \flowE), s, t, d)$~\cite{ntafos1979path}, defined as: $\flowV = \{s, t\} \cup \{v^{in} \mid v \in V\} \cup \{v^{out} \mid v \in V\}$ ($\{s,t\}\cap V = \emptyset$); $\flowE = \{(s, v^{in}) \mid v \in V\} \cup \{(v^{out}, t) \mid v \in V\} \cup \{(v^{in}, v^{out}) \mid v \in V\} \cup \{(u^{out}, v^{in}) \mid (u,v) \in E\}$; and $d(e) = 1$ if $e = (v^{in}, v^{out})$ for some $v \in V$ and $0$ otherwise. The tuple $(\flowG, s, t, d)$ is the \emph{flow reduction} of $G$. Note that $|\flowV| = O(|V|)$, $|\flowE|=O(|E|)$, and $\flowG$ is a DAG. Every flow $f$ of $(\flowG, s, t, d)$ can be decomposed into $|f|$ paths corresponding to a path cover of $G$ (by removing $s$ and $t$ and merging the edges $(v^{in}, v^{out})$ into $v$, see \Cref{sec:minflow-reduction-extended}). A minimum flow of $(\flowG, s, t, d)$  has size $\width(G)$, thus providing an MPC of $G$ after decomposing it (see \Cref{sec:minflow-reduction-extended}). Moreover, the set of edges of the form $(v^{in}, v^{out})$ crossing a maximum ow-cut corresponds to a maximum antichain of $G$ (by merging the edges $(v^{in}, v^{out})$ into $v$, see~\cite{mohring1985algorithmic,rademaker2012optimal,pijls2013another,marchal2018parallel}). By further noting that if $f$ is a minimum flow of $(\flowG, s, t, d)$, and defining $S = \{v \in \flowV \mid s\text{ reaches } v\text{ in }\residual{\flowG}{f}\}$, then $(S, T=\flowV\setminus S)$ corresponds to a maximum ow-cut, we obtain the following result.
\begin{restatable}{lemma}{fastantichain}
\label{thm:main-antichain}
Given a DAG $G = (V,E)$ of width $k$ and an MPC $\pathcover$, we compute a maximum antichain of $G$ in time $O(k|V|+|E|)$.
\end{restatable}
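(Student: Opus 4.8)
The plan is to exploit the correspondence between minimum flows of the flow reduction $(\flowG,s,t,d)$ and MPCs, together with the standard fact that one-way cuts in $\flowG$ whose crossing $(v^{in},v^{out})$-edges form an antichain of $G$, and argue that the specific cut $(S,T)$ defined from the residual network is both a one-way cut and of maximum demand. First I would recall that, given the MPC $\pathcover$ of size $k$, it induces a minimum flow $f$ of $(\flowG,s,t,d)$ with $|f| = \width(G) = k$, and that the residual network $\residual{\flowG}{f}$ is obtained directly from $\pathcover$ (stated in the preliminaries). I would then compute $S = \{v \in \flowV \mid s \text{ reaches } v \text{ in } \residual{\flowG}{f}\}$ by a single graph search (BFS/DFS) from $s$ in $\residual{\flowG}{f}$; since $f$ is a minimum flow there is no decrementing path, so $t \notin S$, and hence $(S,T=\flowV\setminus S)$ is a genuine $st$-cut.

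Next I would verify that $(S,T)$ is a one-way cut: there can be no residual edge from $S$ to $T$ by definition of $S$ (reachability is closed under following residual edges), and a residual edge from $u\in S$ to $w\in T$ in $\residual{\flowG}{f}$ is either a reverse edge of some $(w,u)\in\flowE$ or a direct edge $(u,w)\in\flowE$ with $f((u,w)) > d((u,w))$. Absence of residual $S\to T$ edges then forces: every original edge from $T$ to $S$ is saturated (contributes a reverse residual edge $S\to T$, contradiction unless none exists), so in fact no edge of $\flowG$ crosses from $T$ to $S$ — that is exactly the one-way property. I would also observe that every original edge crossing from $S$ to $T$ must carry flow exactly equal to its demand (otherwise it would be a direct residual edge $S\to T$), so $d((S,T)) = \sum_{e\colon S\to T} d(e)$ and, by flow conservation across the cut, $d((S,T)) = |f| = k$; combined with the known fact that the demand of any ow-cut is at most $|f|$ (weak duality), $(S,T)$ is a maximum ow-cut.

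Then, using the stated correspondence (citing \cite{mohring1985algorithmic,rademaker2012optimal,pijls2013another,marchal2018parallel}), the set $A$ of vertices $v\in V$ such that the edge $(v^{in},v^{out})$ crosses $(S,T)$ — i.e.\ $v^{in}\in S$ and $v^{out}\in T$ — is an antichain of $G$, and its size equals $d((S,T)) = k$, so $A$ is a maximum antichain by Dilworth's theorem. Here one must double-check that each crossing $(v^{in},v^{out})$ edge indeed has $v^{in}\in S, v^{out}\in T$ (not the reverse), which follows from the one-way property since $(v^{in},v^{out})$ is an edge of $\flowG$ directed from $v^{in}$ to $v^{out}$. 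For the running time: $\residual{\flowG}{f}$ has $O(|V|)$ vertices and $O(|E|)$ edges and is constructed directly from $\pathcover$; but $\pathcover$ has total length $O(k|V|)$ only after we have applied transitive sparsification, so I would first sparsify the edges of $G$ to $O(k|V|)$ in $O(|E|)$ time (as described in the sparsification paragraph, width is preserved), rebuild $\flowG$ on the sparsified DAG, obtain $f$ from $\pathcover$ restricted/rerouted onto the sparsified graph, and run the single search in $O(k|V|)$ time; the total is $O(k|V| + |E|)$, with the $|E|$ term coming solely from the sparsification pass.

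The main obstacle I anticipate is the bookkeeping around transitive sparsification: the given MPC $\pathcover$ may use edges that are removed by the sparsifier, so one must show those edges can be re-routed along alternative paths in the sparsified DAG without increasing the number of paths, so that a size-$k$ path cover (hence a minimum flow of the new $\flowG$) is still available in $O(k|V|)$ total time — otherwise the residual-graph search would cost $O(k|E|)$ and the claimed bound fails. Modulo that reduction to $O(k|V|)$ edges, the flow-theoretic core (one-way cut, maximum demand, antichain correspondence) is entirely standard and short.
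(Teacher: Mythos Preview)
Your flow-theoretic core is exactly the paper's approach: build $\residual{\flowG}{\pathcover}$, take $S$ as the vertices reachable from $s$, and read off the antichain from the $(v^{in},v^{out})$ edges crossing $(S,T)$. The paper's own proof is two sentences and relies on the same facts you spell out.

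The one place you go astray is the running-time analysis, where you manufacture a difficulty that is not there. You write that ``$\pathcover$ has total length $O(k|V|)$ only after we have applied transitive sparsification'' and then worry about rerouting the MPC onto a sparsified graph. But $\pathcover$ \emph{always} has total length at most $k|V|$: it consists of $k$ paths in a DAG, and each path has at most $|V|$ vertices. No sparsification is needed to get this bound. Consequently, computing the induced flow $f$ from $\pathcover$ costs $O(k|V|)$ outright; the flow reduction $\flowG$ has $O(|V|)$ vertices and $O(|V|+|E|)$ edges; the residual $\residual{\flowG}{f}$ has at most twice that many edges (each original edge contributes one reverse edge and possibly one direct edge); and a single BFS/DFS from $s$ runs in $O(|V|+|E|)$. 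Summing gives $O(k|V|+|E|)$ directly. Drop the sparsification paragraph and the ``main obstacle'' disappears with it.
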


As such, this allows us to obtain algorithms computing a maximum antichain from any of our MPC algorithms, preserving their running times.

\paragraph{Sparsification, shrinking, splicing.} We say that a spanning subgraph $S = (V, E_S)$ of a DAG $G = (V, E)$ is a \emph{transitive sparsification} of $G$, if for every $u, v \in V$, $u$ reaches $v$ in $S$ if and only if $u$ reaches $v$ in $G$. Since $G$ and $S$ have the same reachability relations on their vertices, they share their antichains, thus $\width(G) = \width(S)$. As such, an MPC of $S$ is also an MPC of $G$, thus the edges $E\setminus E_S$ can be safely removed for the purpose of computing an MPC of $G$. If we have a path cover $\pathcover$ of size $t$ of $G$, then we can \emph{sparsify} (remove some transitive edges) the incoming edges of a particular vertex $v$ to at most $t$ in time $O(t+|N^-(v)|)$. If $v$ has more than $t$ in-neighbors then two of them belong to the same path, and we can remove the edge from the in-neighbor appearing first in the path. We create an array of $t$ elements initialized as $survivor \gets (v_{-\infty})^t$, where $v_{-\infty} \not \in V$ is before every $v\in V$ in topological order. Then, we process the edges $(u,v)$ incoming to $v$, we set $i \gets path(u)$ ($path(u)$ gives the ID of some path of $\pathcover$ containing $u$) and if $survivor[i]$ is before $u$ in topological order we replace it $survivor[i] \gets u$. Finally, the edges in the sparsification are $\{(survivor[i], v) \mid i\in\{1,\ldots,t\} \land survivor[i] \not = v_{-\infty}\}$.

\begin{restatable}{obs}{obssparsification}
\label{obs:sparsificationVertex}
Let $G = (V, E)$ be a DAG, $\pathcover$ a path cover, $|\pathcover|=t$, $v$ a vertex of $G$, and $path:V\rightarrow \{1,\ldots,t\}$ a function  that answers in constant time $path(v)$, the ID of some path of $\pathcover$ containing $v$. We can sparsify the incoming edges of $v$ to at most $t$ in time $O(t+|N^-(v)|)$.
\end{restatable}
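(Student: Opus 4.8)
The statement essentially describes its own proof, so the task is to verify that the described procedure (i) runs in time $O(t + |N^-(v)|)$, and (ii) produces a set of at most $t$ incoming edges whose removal from the rest of $E$ leaves the reachability relation of $G$ unchanged. The plan is to first address running time, then correctness.

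\paragraph{Running time.} Allocating and initializing the array $survivor$ of length $t$ costs $O(t)$. We then iterate once over the $|N^-(v)| = |I^-(v)|$ edges incoming to $v$; for each edge $(u,v)$ we compute $i \gets path(u)$ in $O(1)$ by hypothesis, compare $u$ against $survivor[i]$ in the topological order in $O(1)$ (assuming each vertex stores its index in a fixed topological ordering, which can be precomputed globally), and possibly update $survivor[i]$ in $O(1)$. Finally we scan $survivor$ once more in $O(t)$ to emit the surviving edges. Total: $O(t + |N^-(v)|)$.

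\paragraph{Correctness.} Let $E_S'$ denote the emitted edge set $\{(survivor[i], v) : i \in \{1,\dots,t\},\ survivor[i] \neq v_{-\infty}\}$, and let the sparsified graph be $S$, obtained from $G$ by replacing $I^-(v)$ with $E_S'$ (all other edges untouched). Clearly $|E_S'| \le t$ since it has at most one entry per array slot. It remains to show $S$ and $G$ have the same reachability relation. One direction is trivial: $E_S' \subseteq I^-(v) \subseteq E$ and no other edge was removed, so any path in $S$ is a path in $G$; hence reachability in $S$ implies reachability in $G$. For the converse, it suffices to show that every removed edge $(u,v) \in I^-(v) \setminus E_S'$ is \emph{transitive} in $S$, i.e. $u$ still reaches $v$ in $S$; then any $G$-path can be rerouted around each removed edge (note the only removed edges all point into $v$, and $v$ has no outgoing edges on such a path except as an endpoint, so a single reroute per use suffices). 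So fix a removed edge $(u,v)$ with $i = path(u)$. Since $(u,v)$ was not emitted, when it was processed $survivor[i]$ was already at or past $u$ in topological order, and $survivor[i]$ only moves forward, so the final value $w := survivor[i]$ satisfies: $w$ lies on the path $P \in \pathcover$ with $path$-ID $i$ (same path as $u$), $w$ comes at or after $u$ in topological order, and $(w,v) \in E_S'$. If $w = u$ then $(u,v) = (w,v) \in E_S'$, contradiction; so $w$ strictly follows $u$ on $P$. The subpath of $P$ from $u$ to $w$ uses only original edges of $G$ none of which point into $v$ (they lie along $P$ toward $w \neq v$... unless $v$ itself is an internal vertex of this subpath — but $v = $ head of the removed edge, and if $v$ occurred strictly between $u$ and $w$ on $P$ then $u$ already reaches $v$ within $P$ using edges not in $I^-(v)$-as-removed... this needs a small check), hence this subpath survives in $S$; followed by the edge $(w,v) \in E_S'$, it witnesses that $u$ reaches $v$ in $S$. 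Therefore $S$ is a transitive sparsification of $G$ restricted to the edges of $v$, and in particular $\width(S) = \width(G)$.

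\paragraph{Main obstacle.} The only delicate point is the edge-case bookkeeping in the rerouting argument: one must confirm that the $u$-to-$w$ subpath of $P$ does not itself traverse a removed edge (it cannot, since removed edges all enter $v$, and if $v$ appeared on that subpath we would already have a $P$-internal witness that $u$ reaches $v$), and that simultaneously rerouting \emph{all} removed incoming edges of $v$ is consistent (it is, since a simple path through $v$ uses at most one edge of $I^-(v)$, and reroutes for different target vertices are independent as we only modify $v$'s in-edges). Everything else is immediate from the pigeonhole observation that two in-neighbors on the same path of $\pathcover$ make the earlier edge transitive.
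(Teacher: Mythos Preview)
Your proposal is correct and follows the same approach as the paper: the identical \emph{survivor}-array procedure, the same $O(t+|N^-(v)|)$ accounting, and the same pigeonhole rerouting argument (two in-neighbors on a common cover path make the earlier edge transitive). You simply spell out the correctness in more detail than the paper, which treats it as an observation.

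One small cleanup: your ``main obstacle'' about whether $v$ can appear on the $u$-to-$w$ subpath of $P$ dissolves immediately once you note that both $u$ and $w$ lie in $N^-(v)$ and hence precede $v$ in topological order; therefore $v$ cannot sit between them on $P$, and the subpath uses no edge of $I^-(v)$ at all. With that remark the rerouting argument needs no further case analysis.
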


By first computing a $path$ function, and then applying \Cref{obs:sparsificationVertex} to every vertex we obtain.

\begin{restatable}{lemma}{sparsificationalgorithm}
    \label{sparsification_algorithm}
    Let $G = (V, E)$ be a DAG, and $\pathcover$, $|\pathcover|=t$, be a path cover of $G$. Then, we can sparsify $G$ to $S = (V, E_S)$, such that $\pathcover$ is a path cover of $S$ and $|E_S| \le t|V|$, in $O(t|V| + |E|)$ time.
\end{restatable}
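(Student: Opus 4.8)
The plan is to derive \Cref{sparsification_algorithm} from \Cref{obs:sparsificationVertex} by ``globalizing'' it: build a constant-time $path$ oracle for $\pathcover$ once, then run the per-vertex in-edge sparsification of \Cref{obs:sparsificationVertex} at every vertex, and sum the costs.

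Concretely, I would proceed in three steps. First, \emph{preprocessing}: compute a topological ordering of $G$ in $O(|V|+|E|)$ time and store each vertex's position in an array, so that the ``is $u$ before $w$ in topological order?'' test used by \Cref{obs:sparsificationVertex} runs in $O(1)$ time; keep one array $survivor$ of length $t$ that is reused and reset (in $O(t+|N^-(v)|)$ time) at each vertex. Second, \emph{building $path$}: scan the paths $P_1,\dots,P_t$ of $\pathcover$ and, for every $P_i$ and every vertex $v$ on $P_i$, set $path(v)\gets i$; since $G$ is acyclic each $P_i$ is a simple path and hence has at most $|V|$ vertices, so this scan touches $\sum_{i=1}^{t}|P_i|\le t|V|$ vertex-occurrences and runs in $O(t|V|)$ time, after which $path(v)$ is the ID of some path of $\pathcover$ through $v$ (well defined since $\pathcover$ covers $V$). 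Third, \emph{sparsifying each vertex}: for every $v\in V$ apply \Cref{obs:sparsificationVertex} to $I^-(v)$ and add its at most $t$ surviving edges to $E_S$; this costs $O(t+|N^-(v)|)$ per vertex, hence $\sum_{v\in V}O(t+|N^-(v)|)=O(t|V|+|E|)$ overall, which together with the first two steps yields the claimed $O(t|V|+|E|)$ running time.

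For correctness, the edge bound is immediate: each $v$ retains at most one incoming edge per path index, so $|E_S|=\sum_{v\in V}|N_S^-(v)|\le\sum_{v\in V}t=t|V|$. That $\pathcover$ remains a path cover of $S$ I would argue path by path and edge by edge. If $(u,w)$ is an edge traversed by some $P\in\pathcover$, then $u$ is the last vertex of $P$ before $w$, hence the topologically-latest in-neighbour of $w$ lying on $P$ (any $x\in P$ with $x\to w$ must precede $w$ on $P$, else the DAG has a cycle); when $w$ is processed, every in-neighbour competing with $u$ for the slot $survivor[path(u)]$ also lies on the path $P_{path(u)}$, and in the principal case $P=P_{path(u)}$ none of them occurs after $u$ there, so $u$ wins that slot and $(u,w)\in E_S$. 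Finally, \Cref{obs:sparsificationVertex} guarantees every deleted edge is transitive, so $S$ is in fact a transitive sparsification of $G$ and $\width(S)=\width(G)$.

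I expect the main obstacle to be exactly this preservation argument in the non-principal case, where $u$ is traversed into $w$ by $P$ but $path(u)$ is the index of a \emph{different} path of $\pathcover$ through $u$ (a vertex may lie on several paths). Then $(u,w)$ may fail to survive, and one repairs it by re-routing $P$ along the parallel chain that did survive: if $survivor[path(u)]=u'\neq u$ then $u$ and $u'$ both lie on $P_{path(u)}$ with $u$ before $u'$ and $u'\to w$ kept, so $P$ can replace the step $u\to w$ by the $P_{path(u)}$-subpath from $u$ to $u'$ followed by $u'\to w$, which introduces no repeated vertex because $G$ is acyclic; alternatively one fixes the $path$ oracle so the case does not arise for the edges that matter. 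Either way the $O(t|V|+|E|)$ time and $|E_S|\le t|V|$ bounds are untouched, and the remaining ingredients — the two running-time sums and the bound $\sum_i|P_i|\le t|V|$ from simplicity of paths — are routine.
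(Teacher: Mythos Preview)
Your overall plan---precompute the $path$ oracle in $O(t|V|)$ time, then apply per-vertex in-edge sparsification and sum---is exactly the paper's approach, and your time and size bounds are correct. The only substantive point is the one you flag yourself: preserving $\pathcover$ in $S$.

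Here there is a genuine gap. Your fix~1 (re-route $P$ through the surviving chain) yields a \emph{modified} path cover of size $t$, not $\pathcover$ itself, so it proves a weaker statement than the lemma asserts. Your fix~2 (``fix the $path$ oracle'') is only a suggestion; you would need to show that a single-valued $path$ can always be chosen so that every path edge survives, and this is not obvious when a vertex lies on several paths with distinct successors. For instance, with $P_1=(a,c,d)$, $P_2=(a,d)$ and $path(c)=1$ forced, choosing $path(a)=1$ kills the $P_2$-edge $(a,d)$ at $d$ (beaten by $c$); here $path(a)=2$ happens to work, but the general argument is missing.

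The paper resolves this with a one-line change to the per-vertex procedure rather than via the oracle: instead of initializing $survivor\gets(v_{-\infty})^t$, it initializes $v.survivor[i]$ to the $P_i$-predecessor of $v$ whenever $P_i$ enters $v$. Since that predecessor is the topologically latest vertex of $P_i$ preceding $v$, no later in-neighbor $u'$ with $path(u')=i$ can overwrite it (any such $u'$ would lie after $v$ on $P_i$, forcing a cycle). Hence every path edge is present in $E_S$ from the start, and ``$\pathcover$ is a path cover of $S$'' is immediate---no re-routing and no constraint on how $path$ is chosen.
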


The following lemma shows that we can locally sparsify a subgraph and apply these changes to the original graph to obtain a transitive sparsification.
\begin{restatable}{lemma}{sparsificationOfSubgraph}
    \label{sparsificationOfSubgraph}
    Let $G = (V, E)$ be a graph, $S = (V_S, E_S)$ a subgraph of $G$, and $S' = (V_S, E_{S'})$ a transitive sparsification of $S$. Then $G' = (V, E\setminus (E_S \setminus E_{S'}))$ is a  transitive sparsification of $G$.
\end{restatable}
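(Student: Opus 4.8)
The plan is to check the defining property of a transitive sparsification directly, namely that $G$ and $G' = (V, E \setminus (E_S \setminus E_{S'}))$ induce the same reachability relation on $V$. Since $G'$ is obtained from $G$ purely by deleting edges, it is a spanning subgraph of $G$, so one direction is immediate: any path in $G'$ is also a path in $G$, hence if $u$ reaches $v$ in $G'$ then $u$ reaches $v$ in $G$. All the work is therefore in the converse direction.

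First I would record the relevant edge-set inclusions. Since $S'$ is a transitive sparsification of $S$, it is in particular a spanning subgraph of $S$, so $E_{S'} \subseteq E_S \subseteq E$. Consequently the edge set of $G'$ equals $(E \setminus E_S) \cup E_{S'}$; in particular every edge of $S'$ is an edge of $G'$, and the only edges of $G$ missing from $G'$ are edges of $S$.

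Now take $u, v \in V$ with a path $P = (u = w_0, w_1, \ldots, w_\ell = v)$ in $G$, and rebuild it edge by edge inside $G'$. Any edge $(w_{i-1}, w_i)$ of $P$ with $(w_{i-1}, w_i) \notin E_S \setminus E_{S'}$ survives in $G'$ and is kept. If instead $(w_{i-1}, w_i) \in E_S \setminus E_{S'}$, then $(w_{i-1}, w_i) \in E_S$, so $w_{i-1}, w_i \in V_S$ and $w_{i-1}$ reaches $w_i$ in $S$ (through this single edge); since $S'$ is a transitive sparsification of $S$, $w_{i-1}$ reaches $w_i$ in $S'$, i.e.\ there is a $w_{i-1}$-to-$w_i$ path all of whose edges lie in $E_{S'}$ and hence are edges of $G'$. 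Concatenating the kept edges with these detours yields a walk from $u$ to $v$ in $G'$, which can be shortened to a path; hence $u$ reaches $v$ in $G'$. Together with the trivial direction this shows $G'$ preserves all reachability relations of $G$, so $G'$ is a transitive sparsification of $G$.

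The argument is essentially routine; the only point that needs care is the edge-set bookkeeping — verifying that the detours replacing removed edges never leave $G'$. This is exactly what is guaranteed by the choice $G' = (V, E\setminus(E_S\setminus E_{S'}))$: we delete only edges in $E_S \setminus E_{S'}$, which $S'$ can re-route within itself, and we never touch $E_{S'}$ or $E \setminus E_S$. Note also that acyclicity of $G$ is nowhere used; the proof only invokes the reachability characterization of a transitive sparsification.
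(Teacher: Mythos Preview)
Your proof is correct and follows essentially the same approach as the paper: replace each deleted edge $(a,b)\in E_S\setminus E_{S'}$ on a $G$-path by an $a$-to-$b$ path inside $S'$, which exists because $S'$ is a transitive sparsification of $S$ and whose edges all lie in $G'$. The paper phrases this as a short contradiction argument, whereas you carry out the edge-by-edge reconstruction explicitly, but the content is identical.
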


As explained before, shrinking is the process of transforming an arbitrary path cover $\pathcover$ into an MPC, and it can be solved by finding $|\pathcover|-\width(G)$ decrementing paths in $\residual{\flowG}{\pathcover}$, and then decomposing the resulting flow into an MPC. M\"akinen et al.~\cite{makinen2019sparse} apply this idea to shrink a path cover of size $O(k\log{|V|})$. We generalize this approach in the following lemma.

\begin{restatable}{lemma}{shrinkingalgorithm}
\label{result:shrinking}
    Given a DAG $G = (V, E)$ of width $k$, and a path cover $\pathcover$, $|\pathcover|=t$, of $G$, we can obtain an MPC of $G$ in time $O(t(|V| + |E|))$.
\end{restatable}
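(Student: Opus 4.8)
The plan is to lift $\pathcover$ to a flow in the flow reduction $(\flowG,s,t,d)$ of $G$, repeatedly cancel decrementing paths until the flow becomes minimum, and finally decompose the resulting minimum flow back into an MPC; the whole point is a careful count of how many decrementing paths are needed.

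First I would build $\flowG$ in $O(|V|+|E|)$ time and turn $\pathcover$ into a flow $f$ of size $t$: each path $P\in\pathcover$ becomes an $s$-$t$ path in $\flowG$ (prepend $(s,v_1^{in})$, replace every vertex $v$ of $P$ by the edge $(v^{in},v^{out})$ and every edge $(u,v)$ of $P$ by $(u^{out},v^{in})$, and append $(v_\ell^{out},t)$), and set $f(e)$ to be the number of these $t$ paths through $e$. Reading $\pathcover$ and computing $f$ costs $O(t|V|+|E|)$, since each path of $\pathcover$ has at most $|V|$ vertices. Because $\pathcover$ covers $V$, every demand edge $(v^{in},v^{out})$ carries at least one unit of $f$, so $f$ is a feasible flow of $\flowG$ of size exactly $t$, with $\pathcover$ as a flow decomposition. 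Next I would iterate: while $\residual{\flowG}{f}$ contains a decrementing path (an $s$-$t$ path), find one by a single BFS/DFS from $s$ and update $f$ along it, which decreases $|f|$ by one. The residual network $\residual{\flowG}{f}$ has $O(|V|)$ vertices and $O(|E|)$ edges (direct plus reverse edges), so each search-and-update costs $O(|V|+|E|)$. By the characterization recalled in \Cref{sec:preliminaries}, this loop terminates exactly when $f$ is a minimum flow, which by the minimum-flow reduction has size $\width(G)=k$; hence it performs exactly $t-k$ iterations, for a total of $O((t-k)(|V|+|E|))=O(t(|V|+|E|))$. Note that during these iterations I maintain only the edge-function $f$, not a path decomposition, so no splicing is required here.

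Finally, from the minimum flow $f$ of size $k$ I would extract a flow decomposition in the standard way for DAGs: repeatedly trace a path from $s$ to $t$ through edges carrying positive remaining flow and decrement along it, using a per-vertex pointer to the next positive-flow out-edge to amortize edge scans; this produces $k$ paths in $O(k(|V|+|E|))$ time. Contracting each edge $(v^{in},v^{out})$ back into $v$ and deleting $s$ and $t$ turns these into a path cover of $G$ of size $k=\width(G)$, i.e., an MPC. Summing the three phases ($O(t|V|+|E|)$ to set up, $O(t(|V|+|E|))$ to shrink, $O(k(|V|+|E|))$ to decompose) gives the claimed $O(t(|V|+|E|))$ bound.

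The hard part here is not conceptual but bookkeeping: one must verify that the lifted $f$ is genuinely a feasible flow (this is precisely where the path-cover hypothesis is used), that one BFS per decrement suffices and the residual network stays of size $O(|V|+|E|)$ throughout, and that the concluding decomposition runs within budget. The only places where the (possibly large) parameter $t$ enters the running time — the initial pass over $\pathcover$ and the $t-k$ decrementing-path searches — are already absorbed by the target bound, so no amortization or guided search is needed for this lemma (that refinement is reserved for the linear-time algorithm of \Cref{thm:mpc-flow}).
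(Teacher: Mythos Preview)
Your proposal is correct and follows essentially the same approach as the paper: lift $\pathcover$ to a feasible flow in the flow reduction, cancel $t-k$ decrementing paths by one graph traversal each, then decompose the resulting minimum flow into $k$ paths. Your accounting is in fact slightly more careful than the paper's (you note the $O(t|V|)$ cost of reading $\pathcover$, which the paper absorbs into the final bound without comment), but the argument is the same.
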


As said before, splicing consists in reconnecting paths in a path cover $\pathcover$ so that (after reconnecting) at least one of the paths contains as a subpath a certain path $D$, in time $O(|D|)$. Splicing additionally requires that for every edge $e$ of $D$ there is at least one path in $\pathcover$ containing $e$.

\begin{restatable}{lemma}{splicingalgorithm}
\label{result:splicing}
    Let $G = (V, E)$ be a DAG, $D$ a proper path, and $\pathcover$ path cover such that for every edge $e \in D$ there exists $P\in\pathcover, e \in P$. We obtain a path cover $\pathcover'$ of $G$ such that $|\pathcover'| = |\pathcover|$ and there exists $P\in\pathcover'$ containing $D$ as a subpath, in time $O(|D|)$. Moreover, $\mu_{\pathcover}(e) = \mu_{\pathcover'}(e)$ for all $e \in E$.
\end{restatable}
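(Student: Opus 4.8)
The plan is to prove the splicing lemma (\Cref{result:splicing}) by an induction on $|D|$, maintaining at each step the invariant that some path in the current cover has a prefix that is a suffix of $D$ up to the current split point. Concretely, write $D = d_0, d_1, \ldots, d_{\ell}$. I would process the edges $e_i = (d_{i-1}, d_i)$ of $D$ one at a time, from $i = 1$ to $\ell$, maintaining a distinguished path $Q$ in the cover that currently realizes $d_0, \ldots, d_{i-1}$ as a prefix. Initially, since $e_1 \in D$ there is, by hypothesis, a path $P_1 \in \pathcover$ containing $e_1$; I split $P_1$ at $d_0$ and take $Q$ to be the suffix of $P_1$ starting at $d_0$ (which begins $d_0, d_1$). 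The leftover prefix of $P_1$ (ending at $d_0$) is set aside as a "dangling" piece that will be reattached at the end, or merged into $Q$'s ancestry --- more precisely, I will maintain that the pieces I cut always get recombined so the multiset of paths stays size $|\pathcover|$.

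The inductive step is the heart of the argument. Suppose $Q$ realizes $d_0, \ldots, d_{i-1}$ and I want to extend to include $e_{i+1} = (d_i, d_{i+1})$. If $Q$ already continues with $d_{i+1}$ after $d_i$, there is nothing to do. Otherwise, by hypothesis there is a path $R \in \pathcover$ (in the \emph{current} cover, so I must argue the hypothesis is preserved) containing $e_{i+1} = (d_i, d_{i+1})$. I split $R$ at $d_i$ into a prefix $R^{\le}$ (ending at $d_i$) and suffix $R^{\ge}$ (starting at $d_i$, continuing $d_i, d_{i+1}, \ldots$), and I split $Q$ at $d_i$ into $Q^{\le}$ (the good part, $d_0, \ldots, d_i$) and $Q^{\ge}$. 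Then I reconnect: $Q^{\le}$ concatenated with $R^{\ge}$ becomes the new $Q$ (now realizing $d_0, \ldots, d_{i+1}$), and $R^{\le}$ concatenated with $Q^{\ge}$ becomes a replacement for the old path that $R$ came from. This is a classic "crossover at $d_i$" swap; it keeps the number of paths fixed and, crucially, does not change $\mu_{\pathcover}(e)$ for any $e \in E$, because every edge that was used is still used the same number of times --- we only reassigned which path each edge-occurrence belongs to. I would phrase this as: the swap is a bijection on edge-occurrences, hence preserves all edge multiplicities, hence preserves being a flow decomposition.

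For the running time, I would note that each edge $e_i$ of $D$ is handled in $O(1)$ amortized time provided the path cover is stored in a suitable linked structure: each path as a doubly linked list of vertices, with a pointer from each vertex-occurrence to its successor/predecessor, so that "split at $d_i$" and "concatenate" are $O(1)$ splice operations on linked lists. The only subtlety is locating, for edge $e_{i+1}$, a path currently containing it; but we never need to search, because either $Q$ already continues correctly, or we can afford to have maintained, as part of the data structure, for each edge of $D$ a pointer to one occurrence of it in the cover --- and the crossover only ever moves the \emph{tail} of a path, so an occurrence of $e_{i+1} = (d_i, d_{i+1})$ that lay in some path $R$ still lies in (the modified) $R$ afterwards, because we split $R$ \emph{at $d_i$} and $e_{i+1}$ is on the suffix side which we keep contiguous. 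Thus the total time is $O(|D|)$.

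The main obstacle I anticipate is the bookkeeping needed to guarantee that the hypothesis "for every edge $e \in D$ there is a path in $\pathcover$ containing $e$" is preserved throughout the process, i.e.\ that when I reach edge $e_{i+1}$ there really still is a path containing it after all the earlier crossovers. The clean way around this is the observation above: a crossover at $d_i$ splits two paths exactly at the vertex $d_i$ and reconnects the four pieces, so any edge $e_j = (d_{j-1}, d_j)$ with $j \ne i$ that lay entirely within one of the original two paths still lies entirely within one of the two new paths (its endpoints are on the same side of the cut point $d_i$, since a path visits $d_i$ at most once in a DAG). Edges with $j < i$ are additionally "frozen" into the prefix $Q^{\le}$ of $Q$ and never touched again. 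Formalizing this "edges survive crossovers" claim --- using acyclicity to ensure each $d_j$ appears at most once on each relevant path, so the notion of "side of $d_i$" is well defined --- is the one place the proof needs genuine care; everything else is routine linked-list surgery.
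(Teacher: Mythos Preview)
Your proposal is correct and is essentially the same argument as the paper's: process the edges of $D$ in order, maintain a distinguished path of the cover that already contains the processed prefix of $D$, and whenever the next edge of $D$ is not the next edge of this path, perform a crossover swap at the shared vertex with some path that does contain it; multiplicity preservation (hence preservation of the hypothesis and of the path-cover property) is immediate because edges only change ownership. The paper's version is marginally cleaner in that it keeps the growing prefix of $D$ as a \emph{subpath} of the distinguished path rather than a prefix, which avoids your initial ``dangling piece'' bookkeeping and the off-by-one in your indexing of the inductive step.
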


Because of the last property of $\pathcover'$, the flow induced by $\pathcover$ is the same as the flow induced by $\pathcover'$. As such, if $\pathcover$ is a flow decomposition of a flow $f$, then $\pathcover'$ is also a flow decomposition of $f$.

\section{Divide and Conquer Algorithm}\label{sec:DandC}

\begin{figure}[t]
    \centering
    \begin{subfigure}[b]{0.24\textwidth}
        \centering
        \includegraphics[width=0.95\textwidth]{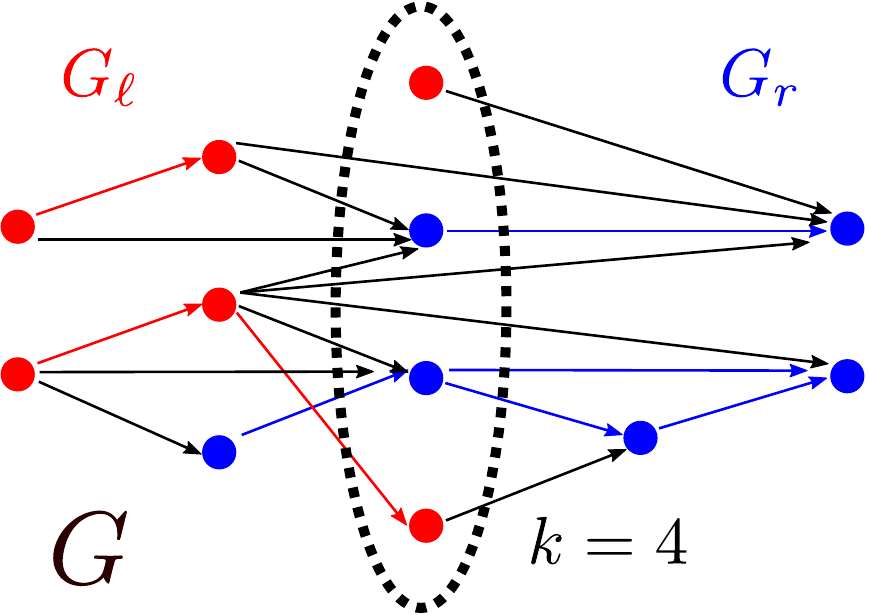}
        \caption[]%
        {{\small Input graph}}
        \label{subfig:input-graph}
    \end{subfigure}
    \begin{subfigure}[b]{0.24\textwidth}
        \centering
        \includegraphics[width=0.95\textwidth]{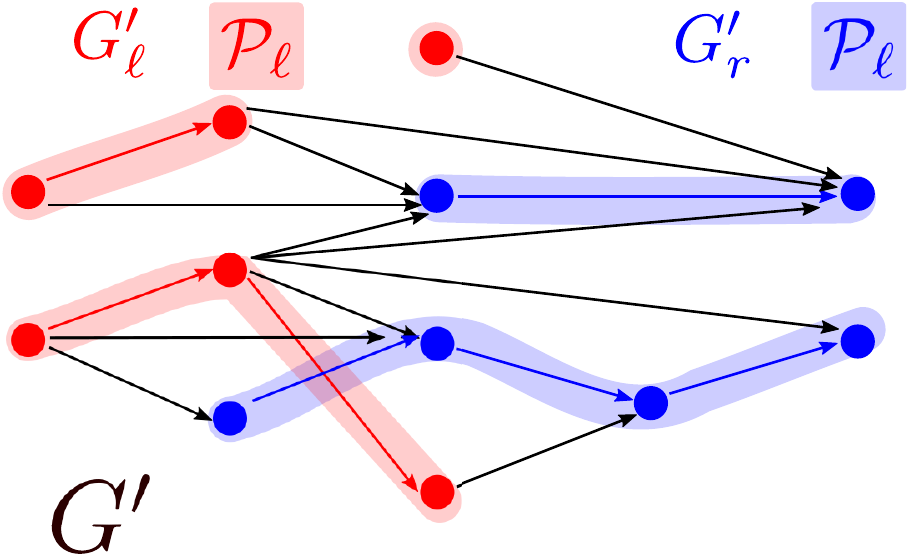}
        \caption[]%
        {{\small Result of recursion}}
        \label{subfig:recursion}
    \end{subfigure}
    \begin{subfigure}[b]{0.24\textwidth}
        \centering
        \includegraphics[width=0.95\textwidth]{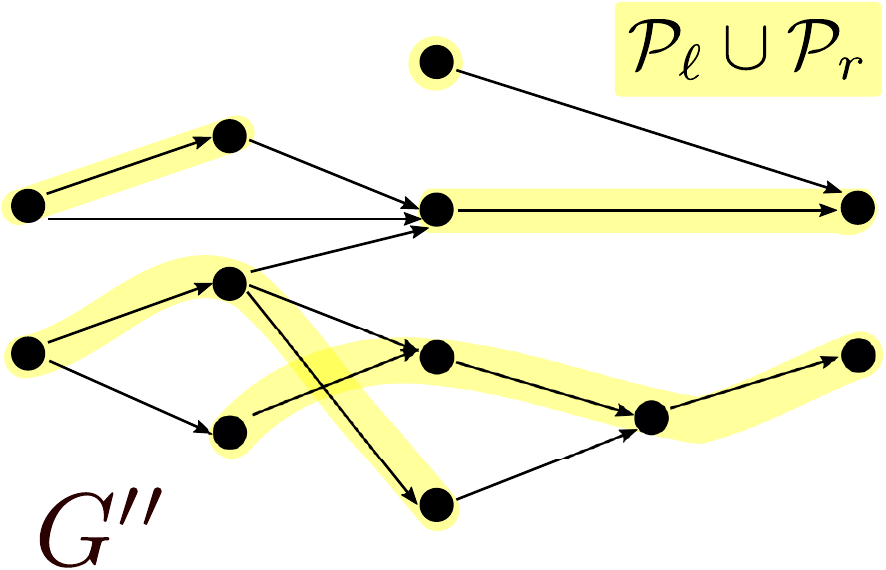}
        \caption[]%
        {{\small Result of sparsification}}
        \label{subfig:sparsification}
    \end{subfigure}
    \begin{subfigure}[b]{0.24\textwidth}
        \centering
        \includegraphics[width=0.95\textwidth]{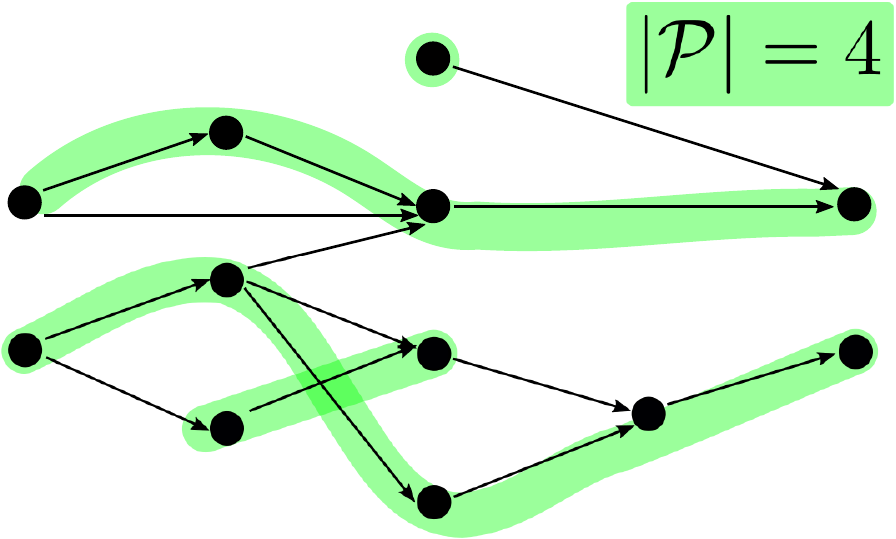}
        \caption[]%
        {{\small Result of shrinking}}
        \label{subfig:shrinking}
    \end{subfigure}
    \caption[]
    {\small Main steps of the divide-and-conquer algorithm applied to a DAG $G$. \Cref{subfig:input-graph} shows the input graph, a maximum antichain, and the division into $G_{\ell}$ and $G_r$. \Cref{subfig:recursion} shows the resulting graph $G'$ after applying the algorithm recursively into $G_{\ell}$ and $G_r$, the corresponding sparsifications $G_{\ell}'$ and $G_r'$, and path covers $\pathcover_{\ell}$ and $\pathcover_r$. \Cref{subfig:sparsification} shows the result $G''$ of the sparsification algorithm run on $G'$ with the paths $\pathcover_\ell \cup \pathcover_r$. \Cref{subfig:shrinking} shows the result $\pathcover$ after shrinking.}
     \label{fig:divide-and-conquer}
\end{figure}

\sparsealgorithm*

\begin{proof}
    Before starting the recursion compute a topological ordering of the vertices $v_1, \ldots , v_{|V|}$ in time $O(|V|+|E|)$. Solve recursively in the subgraph $G_{\ell} = (V_{\ell}, E_{\ell})$ induced by $v_1, \ldots , v_{|V|/2}$, obtaining an MPC $\pathcover_{\ell}$ of a sparsification $G'_{\ell} = (V_{\ell}, E'_{\ell})$ of $G_{\ell}$ with $|E'_{\ell}| \le 2|\pathcover_{\ell}||V_{\ell}|$, and in the subgraph $G_r = (V_r, E_r)$ induced by $v_{|V|/2+1}, \ldots , v_{|V|}$, obtaining an MPC $\pathcover_{r}$ of a sparsification $G'_r = (V_r, E'_r)$ of $G_r$ with $|E'_r| \le 2|\pathcover_{r}||V_r|$. By \Cref{topologicalDoNotIncreaseWidth}, $|\pathcover_\ell|\le k$ and $|\pathcover_r|\le k$. Applying \Cref{sparsificationOfSubgraph} with $G'_{\ell}$ and $G'_{r}$ we obtain that $G' = (V, E' = E'_{\ell} \cup E'_{r} \cup E_{\ell r})$ is a sparsification of $G$ with $|E'| \le 2|\pathcover_{\ell}||V_{\ell}| + 2|\pathcover_r||V_r| + |E_{\ell r}| \le |E_{\ell r}| + 2k|V|$, where $E_{\ell r}$ are the edges in $G$ from $V_{\ell}$ to $V_r$. We consider the path cover $\pathcover_{\ell} \cup \pathcover_r$ of $G'$ and use \Cref{sparsification_algorithm} to obtain a sparsification $G'' = (V, E'')$ of $G'$ in time $O(|E'| + (|\pathcover_{\ell}|+|\pathcover_r|)|V|) = O(|E_{\ell r}| + k|V|)$ such that $|E''| \le (|\pathcover_{\ell}|+|\pathcover_r|)|V| \le 2k|V|$. Finally, we \emph{shrink} $\pathcover_{\ell} \cup \pathcover_r$ in $G''$ to $\pathcover$ of size $k$ in $O((|\pathcover_{\ell}|+|\pathcover_r|)k|V|) = O(k^2|V|)$ time (\Cref{result:shrinking}). 

    The complexity analysis considers the recursion tree of the algorithm. Note that the complexity of a recursion step is $O(k^2|V| + |E_{\ell r}|)$, that is, every vertex of the corresponding subgraph costs $O(k^2)$ and every edge going from the left subgraph to the right subgraph costs $O(1)$. Since the division of the graph generates disjoint subgraphs, every vertex appears in $O(\log{|V|})$ nodes in the recursion tree, and every edge going from left to right appears in exactly one node in the recursion tree. Therefore, the total cost is $O(|E|+k^2|V|\log{|V|})$. \Cref{fig:divide-and-conquer} illustrates the algorithm. 
\end{proof}

 Since our algorithm is based on divide and conquer, we can parallelize the work done on every sub-part of the input, and obtain a linear-time parallel algorithm for the MPC problem.

\parallelsparsealgorithm*

\section{Progressive Flows Algorithm}\label{sec:progressive-flows}

In this section we prove \Cref{thm:mpc-flow}. To achieve this result we rely on the reduction from MPC in a DAG to minimum flow (see \Cref{sec:preliminaries}). We process the vertices of $G$ one by one in a topological ordering $v_1, \ldots, v_{|V|}$. At each step, we maintain a set of \emph{st-flow paths} $\pathcover_i$ that corresponds to a flow decomposition of a minimum flow of $\flowG_i = (\flowV_i, \flowE_i)$ (the flow reduction of $G_i = G[\{v_1, \ldots, v_i\}]$), that is, an MPC of $G_i$. When the next vertex $v_{i+1}$ is considered, we first use $\pathcover_{i}$ to sparsify its incoming edges to at most $|\pathcover_{i}| = O(k)$ in time $O(k+|N^-(v_{i+1})|)$ (see \Cref{obs:sparsificationVertex} and \Cref{topologicalDoNotIncreaseWidth}). Then, we set $\tmp_{i+1} \gets \pathcover_i \cup \{(v_{i+1}^{in}, v_{i+1}^{out})\}$, where $(v_{i+1}^{in}, v_{i+1}^{out})$ corresponds to the edge representing $v_{i+1}$ in the flow reduction (we represent $st$-flow paths either as a sequence of vertices or edges excluding the extremes for convenience). $\tmp_{i+1}$ represents a path cover of $G_{i+1}$, and we use it to try to find a decrementing path in $\residual{\flowG_{i+1}}{\tmp_{i+1}}$. If such decrementing path $D$ is found, some flow paths along $D$ are \emph{spliced} to generate $\pathcover_{i+1}$, such that $|\pathcover_{i+1}| = |\pathcover_{i}|$ (see \Cref{sec:splicing-algorithm}). Otherwise, if no decrementing path is found, we set $\pathcover_{i+1} \gets \tmp_{i+1}$.

We guide the traversal for a decrementing path by assigning an integer level $\ell(v)$ to each vertex $v$ in $\flowG_{i}$. The search is performed in a \emph{layered} manner: it starts from the highest reachable layer (the vertices of highest level according to $\ell$), and it only continues to the next highest reachable layer once all reachable vertices from the current layer have been visited (see \Cref{sec:layered-traversal}). To allow the \emph{layered traversal} and to achieve amortized $O(k^3)$ time per vertex, we maintain three invariants in the algorithm (see \Cref{sec:levels-and-invariants}) and update the level assignment accordingly (see \Cref{sec:level-updates}).

\subsection{Levels, layers and invariants}\label{sec:levels-and-invariants}
We define the level assignment given to the vertices of $\flowG_{i}$, $\ell: \flowV_i \to \{0, 1, \ldots, \width(G_i)\} \cup \{-\infty, +\infty\}$, and the invariants maintained on $\ell$. A \emph{layer} is a maximal set of vertices with the same level, thus layer $l$ is $\{v \in V(\flowG_{i}) \mid \ell(v) = l\}$. All layers form a partition of $\flowV_i$. We extend the definition of level assignment to paths, the level of a path is the maximum level of a vertex in the path, that is, if $P$ is a path of $\flowG_i$, then $\ell(P) = \max_{v \in P} \ell(v)$. We define $\pathcover_i^{\ge l} \subseteq \pathcover_i$, as the flow paths whose level is at least $l$, $\pathcover_i^{\ge l} = \{P \in \pathcover_i \mid \ell(P)\ge l\}$. Note that $|\pathcover_i^{\ge l}| \ge |\pathcover_i^{\ge l'}|$ if $l' > l$. 

At the beginning we fix $\ell(s) = -\infty$ and $\ell(t) =  +\infty$. We also maintain that $0\le\ell(v)\le\width(G_i)$ for all $v \in \flowV_i \setminus \{s,t\}$. Additionally, we maintain the following invariants:
\begin{description}
    \item[Invariant A]: If $(u, v)$ is an edge in $\residual{\flowG_i}{\pathcover_i}$ and $\{u, v\} \cap \{s, t\} = \emptyset$, then $\ell(u) \ge \ell(v)$. 
    \item[Invariant B]: If $(u^{in}, u^{out})$ is the last edge of some $P \in \pathcover_i$, then $\ell(u^{in}) < \ell(u^{out})$.
    \item[Invariant C]: If $l, l'$ are positive integers with $l' > l$, then $|\pathcover_i^{\ge l}| > |\pathcover_i^{\ge l'}|$. 
\end{description}

Note that, since we do not include $s$ and $t$ in the representation of flow paths, $0\le\ell(P)\le\width(G_i)$ for all $P \in \pathcover_{i}$, moreover, by \invB{}, $\ell(P) \ge 1$, thus $\pathcover_i^{\ge 1} = \pathcover_i$. Also note that \invC{} implies that every layer $l \in \{1,\ldots,L\}, L =\max_{v\in\flowV_i\setminus\{t\}}\ell(v)$ is not empty.

\subsection{Progressive flows algorithm}\label{sec:progressive-flows-algorithm}
Our algorithm starts by using $\pathcover_i$ to obtain at most $|\pathcover_i|$ edges incoming to $v_{i+1}$ in time $O(|\pathcover_i|+|N^-(v_{i+1})|) = O(k+|N^-(v_{i+1})|)$ (see \Cref{obs:sparsificationVertex}). This procedure requires to answer $path(v)$ (the ID of some path of $\pathcover_{i}$ containing $v$) queries in constant time. To satisfy this requirement, we maintain path IDs on every vertex/edge of every flow path $P \in \pathcover_i$. In each iteration of our algorithm, these path IDs can be broken by the splicing algorithm (\Cref{sec:splicing-algorithm}) but are repaired before the beginning of the next iteration (\Cref{sec:level-updates}). The following lemma states that the sparsification of incoming edges in $G_{i+1}$ produces an sparsification of outgoing edges in the residual.

\begin{lemma}\label{lemma:sparsifiedresidual}
For every $x \in \flowV_{i+1}\setminus \{s, t\}$, $|N^+({x})| = O(|\pathcover_{i}|)$, in $\residual{\flowG_{i+1}}{\tmp_{i+1}}$.
\end{lemma}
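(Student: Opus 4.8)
The plan is to understand the structure of the residual network $\residual{\flowG_{i+1}}{\tmp_{i+1}}$ and where out-edges of an internal vertex $x$ can come from. Recall from the preliminaries that for any flow $f$ (here the flow induced by $\tmp_{i+1}$), the residual edge set is $E_f = \{(u,v) \mid (v,u) \in \flowE_{i+1}\} \cup \{e \mid f(e) > d(e)\}$: the reverse of every edge of $\flowG_{i+1}$, together with every forward edge carrying flow strictly above its demand. So the out-neighbors of $x$ in the residual come from two sources: (i) reverse copies of edges of $\flowG_{i+1}$ that \emph{enter} $x$, i.e.\ corresponding to $\flowE_{i+1}$-edges of the form $(w, x)$; and (ii) forward copies of edges of $\flowG_{i+1}$ that \emph{leave} $x$ and carry surplus flow, i.e.\ edges $(x, w) \in \flowE_{i+1}$ with $f((x,w)) > d((x,w))$. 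I would bound the contribution of each source separately.

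**Bounding source (i): reverse edges.** First I would split on the type of the internal vertex $x$. If $x = v^{out}$ for some $v$, the only $\flowE_{i+1}$-edge entering $x$ is $(v^{in}, v^{out})$, so source (i) contributes exactly one out-neighbor. If $x = v^{in}$ for some vertex $v$ of $G_{i+1}$, the $\flowE_{i+1}$-edges entering $x$ are $(s, v^{in})$ — which we exclude since its reverse leads to $s$ and we only count internal out-neighbors — together with the edges $(u^{out}, v^{in})$ for $(u,v) \in E(G_{i+1})$, i.e.\ one per in-neighbor of $v$ in $G_{i+1}$. Here is where I use the sparsification step: before processing $v_{i+1}$ we applied \Cref{obs:sparsificationVertex} with the path cover $\pathcover_i$ (of size $O(k)$), so $v_{i+1}$ has at most $|\pathcover_i| = O(k)$ in-neighbors; and for every $v = v_j$ with $j \le i$, the graph $G_{i+1}$ restricted to $\{v_1,\dots,v_i\}$ is exactly $G_i$, on which the in-edges were already sparsified to $O(k)$ in an earlier iteration (and adding $v_{i+1}$ adds no new in-edges to old vertices). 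So in all cases $v$ has $O(|\pathcover_i|)$ in-neighbors in $G_{i+1}$, hence source (i) contributes $O(|\pathcover_i|)$ internal out-neighbors.

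**Bounding source (ii): surplus forward edges.** The flow $f$ induced by $\tmp_{i+1}$ has value $|\tmp_{i+1}| = |\pathcover_i| + 1$, since $\tmp_{i+1} = \pathcover_i \cup \{(v_{i+1}^{in}, v_{i+1}^{out})\}$. A forward edge $(x,w) \in \flowE_{i+1}$ contributes to source (ii) only when $f((x,w)) \ge 1$ (in fact $> d((x,w))$, but $\ge 1$ suffices for counting), and since $f$ is a decomposition into $|\pathcover_i|+1$ paths, the total flow leaving $x$ is $\sum_{w} f((x,w)) \le |\pathcover_i| + 1$ (each of the $|\pathcover_i|+1$ flow paths passes through $x$ at most once in a DAG and thus uses at most one out-edge of $x$). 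Therefore at most $|\pathcover_i| + 1$ distinct forward out-edges of $x$ carry positive flow, so source (ii) contributes $O(|\pathcover_i|)$ out-neighbors as well. Combining, $|N^+(x)| = O(|\pathcover_i|)$ in $\residual{\flowG_{i+1}}{\tmp_{i+1}}$, for every internal $x$.

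**The main obstacle.** The one place that needs care — and the crux of the argument — is source (i) for vertices of the form $v^{in}$: I must be sure that \emph{every} vertex of $G_{i+1}$, not just the freshly inserted $v_{i+1}$, has only $O(k)$ in-neighbors at this moment. This relies on the invariant, maintained inductively across iterations of the main loop, that after processing $v_j$ the induced subgraph $G_j$ has been transitively sparsified so that each vertex has $O(|\pathcover_{j-1}|) = O(k)$ in-edges, together with the observation that passing from $G_i$ to $G_{i+1}$ adds edges only \emph{into} the new vertex $v_{i+1}$, leaving the in-degrees of $v_1,\dots,v_i$ untouched; and that $|\pathcover_i| \le k$ by \Cref{topologicalDoNotIncreaseWidth} (since $\pathcover_i$ is an MPC of $G_i$, a consecutive induced subgraph of $G$). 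Everything else is a direct unfolding of the definition of the residual network and of flow decomposition, so I do not expect technical difficulty there.
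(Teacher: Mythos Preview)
Your proof is correct and follows essentially the same case split as the paper's (vertices of the form $v^{in}$ versus $u^{out}$, reverse versus direct residual edges, with the sparsification invariant handling the reverse edges at $v^{in}$). One small difference worth noting: to bound the number of \emph{direct} residual edges leaving $u^{out}$, the paper takes the ow-cut $(S,T)$ with $S=\{v:v\text{ reaches }u^{out}\}$ and observes that the flow across it is at most $|\tmp_{i+1}|$, whereas you argue more directly that each of the $|\pathcover_i|+1$ $st$-paths in the decomposition uses at most one out-edge of $x$, so the total out-flow (hence the number of positive-flow out-edges) is at most $|\pathcover_i|+1$. Both arguments are equivalent here; yours is slightly more elementary. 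Two cosmetic points: the lemma does not restrict $N^+(x)$ to internal vertices, so the reverse edge $(v^{in},s)$ and the possible direct edge $(u^{out},t)$ should be counted too (each adds at most one, so the bound is unaffected); and when you pass from ``in-degree $\le |\pathcover_{j-1}|$'' for $v_j$ to ``$O(|\pathcover_i|)$'' you are implicitly using $|\pathcover_{j-1}|\le|\pathcover_i|$ for $j\le i$, which the paper justifies by applying \Cref{topologicalDoNotIncreaseWidth} to $G_i$.
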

\begin{proof}
If $x$ is of the form $v^{in}$, then its only direct edge could be $(v^{in}, v^{out})$ (only if $(v^{in}, v^{out})$ appears in more than one path in $\pathcover_{i+1}$), its reverse edges are of the form $(v^{in}, u^{out})$, such that $(u, v)$ is an edge in $G_{i+1}$, thus there are at most $|\pathcover_i|$ of such edges because of sparsification (recall that $|\pathcover_j| \le |\pathcover_i|$ for $j < i$, by \Cref{topologicalDoNotIncreaseWidth}). On the other hand, if $x$ is of the form $u^{out}$, then the only reverse edge is $(u^{out}, u^{in})$. To bound the number of direct edges consider the $st$-ow-cut $(S, T)$, with $S =\{v \in \flowV_{i+1} : \mbox{$v$ reaches $u^{out}$ in $\flowG_{i+1}$}\}$. The flow induced by $\tmp_{i+1}$ crossing the cut cannot be more that $|\tmp_{i+1}| = |\pathcover_i|+1$, and thus the number of direct edges $(u^{out}, v^{in})$ is at most $|\pathcover_i|+1$.
\end{proof}


\subsubsection{Layered traversal}\label{sec:layered-traversal}
Our layered traversal performs a BFS in each reachable layer from highest to lowest. If $t$ is reached, the search stops and the algorithm proceeds to splice the flow paths along the decrementing path found. Since $\pathcover_{i}$ represents a minimum flow of $\flowG_i$, every decrementing path $D$ in $\residual{\flowG_{i+1}}{\tmp_{i+1}}$ starts with the edge $(s, v_{i+1}^{in})$ and ends with an edge of the form $(u^{out}, t)$ such that some flow path of $\pathcover_i$ ends at $u^{out}$. Moreover, since $(v_{i+1}^{in}, v_{i+1}^{out})$ does not exist in $\residual{\flowG_{i+1}}{\tmp_{i+1}}$, the second edge of $D$ must be a reverse edge of the form $(v_{i+1}^{in}, u^{out})$, such that $u$ is an in-neighbor of $v_{i+1}$ in $G_{i+1}$.

We work with $|\pathcover_{i}|+1$ queues $Q_{0}, Q_1, \ldots, Q_{|\pathcover_{i}|}$ (one per layer), where $Q_j$ contains the \emph{enqueued} elements from layer $j$, therefore it is initialized as $Q_j \gets \{u^{out} \mid (u^{out}, v_{i+1}^{in}) \in \flowE_{i+1} \land \ell(u^{out}) = j\}$. By \Cref{lemma:sparsifiedresidual}, this initialization takes $O(|\pathcover_{i}|) = O(k)$ time, and it is charged to $v$. We start working with $Q_{|\pathcover_i|}$. When working with $Q_j$, we obtain the first element $u$ from the queue (if no such element exists we move to layer $j-1$ and work with $Q_{j-1}$), then we \emph{visit} $u$ and for each non-visited out-neighbor $v$ we add $v$ to $Q_{\ell(v)}$. Adding the out-neighbors of $v$ to the corresponding queues is charged to $v$, which amounts to $O(|\pathcover_i|) = O(k)$ by \Cref{lemma:sparsifiedresidual}. Since edges in the residual do not increase the level (\invA), out-neighbors can only be added to queues at an equal or lower layer. As such, this traversal advances in a \emph{layered} manner, and it finds a decrementing path if one exists.

Note that the running time of the layered traversal can be bounded by $O(|\pathcover_i|)=O(k)$ per visited vertex. If no decrementing path is found we update the level of the vertices as explained in \Cref{sec:level-updates}. Otherwise, we first splice flow paths along the decrementing path $D$ (\Cref{sec:splicing-algorithm}).

\subsubsection{Splicing algorithm}\label{sec:splicing-algorithm}
Given a decrementing path $D$ in $\residual{\flowG_{i+1}}{\tmp_{i+1}}$, we splice flow paths along $D$ to obtain $\pathcover_{i+1}$. Reverse edges in $D$ indicate that we should push $1$ unit of flow in the opposite direction, thus an edge representing this flow unit should be created. On the other hand, direct edges in $D$ indicate that we should subtract $1$ unit of flow from that edge, in other words, that this edge should be removed from some flow path containing it. As explained in \Cref{sec:layered-traversal}, $D$ starts by a direct edge $(s, v_{i+1}^{in})$, followed by a reverse edge $(v_{i+1}^{in}, u^{out})$ such that $(u, v_{i+1})$ is an edge in $G_{i+1}$. It then continues by a (possibly empty) sequence of reverse and direct edges, and it finishes by a direct edge $(u^{out}, t)$, such that some flow path of $\pathcover_{i}$ ends at $u^{out}$.

A \emph{direct (reverse) segment} is a maximal subpath of \emph{direct (reverse)} edges of $D$. The \emph{splicing} algorithm processes direct and reverse segments interleaved as they appear in $D$. It starts by processing the first reverse segment (the one starting with $(v_{i+1}^{in}, u^{out})$). The procedure that process reverse segments receives as input the suffix of a flow path (the first call receives $((v_{i+1}^{in}, v_{i+1}^{out}))$). It creates the corresponding flow subpath (the reverse of the segment), appends it to the path that received as input, and provides the resulting path as input of the procedure handling the next direct segment. The procedure that handles direct segments $S$, also receives as input the suffix of a flow path. It \emph{splices} the paths of the flow decomposition along $S$ using the procedure of \Cref{result:splicing}, obtaining a new flow decomposition such that one of the paths $P$ contains $S$ as a subpath. It then removes $S$ from $P$ and reconnects the prefix of $P$ before $S$ with the path given as input, and provides the suffix of $P$ after $S$ as input of the procedure handling the next reverse segment.

Note that both procedures run in time proportional to the corresponding segment (see \Cref{result:splicing} for direct segments). As such, the splicing algorithm takes $O(D)$ time. Moreover, since all vertices in the decrementing path are also vertices visited by the traversal, the running time is bounded above by the running time of the layered traversal, that is, $O(k)$ per visited vertex.

\begin{figure}[t]
    \centering
    \begin{subfigure}[b]{0.4\textwidth}
        \centering
        \includegraphics[width=0.8\textwidth]{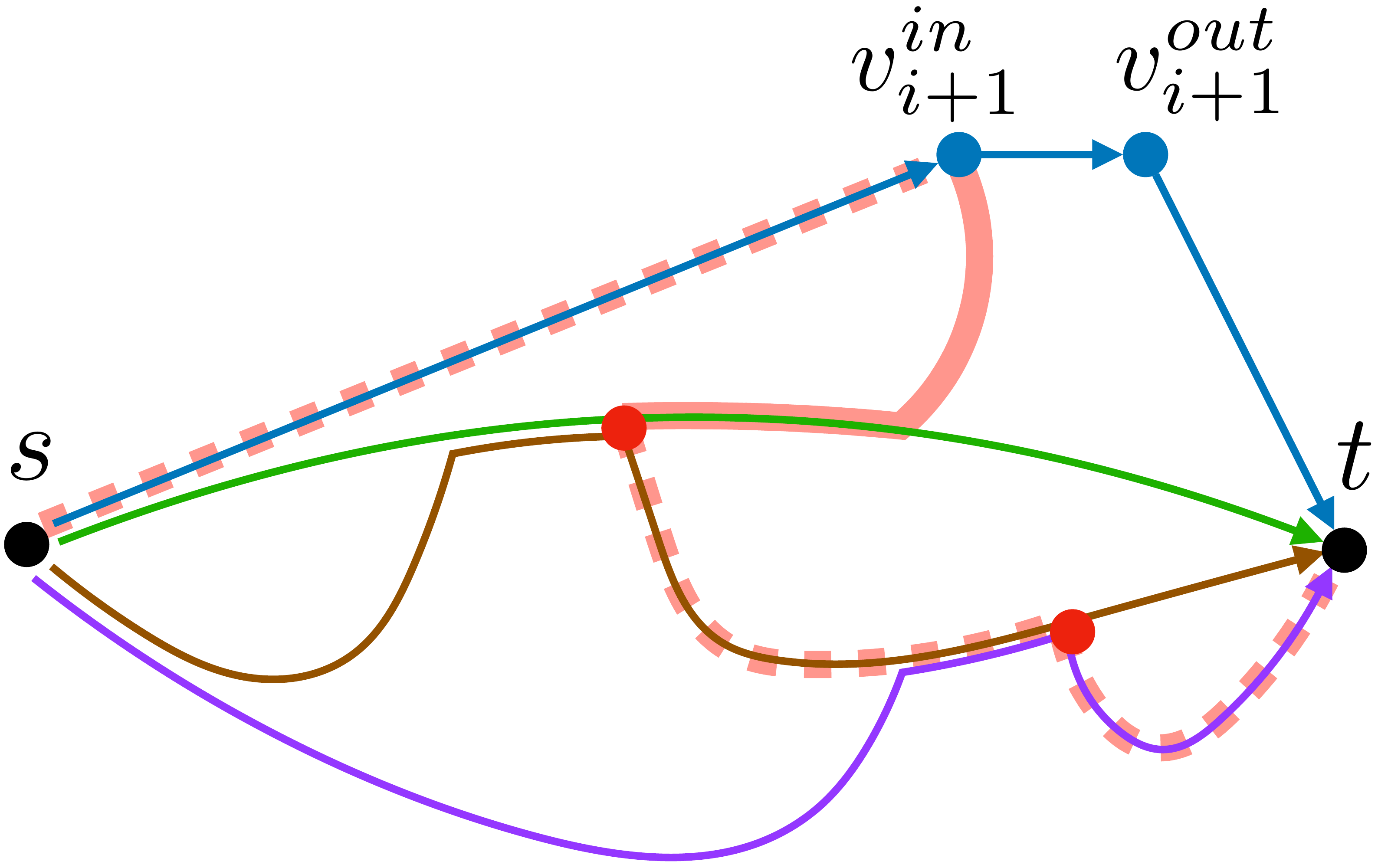}
        \caption[]%
        {{\small Before splicing}}
        \label{subfig:before}
    \end{subfigure}
    \hspace{1cm}
    \begin{subfigure}[b]{0.4\textwidth}
        \centering
        \includegraphics[width=0.8\textwidth]{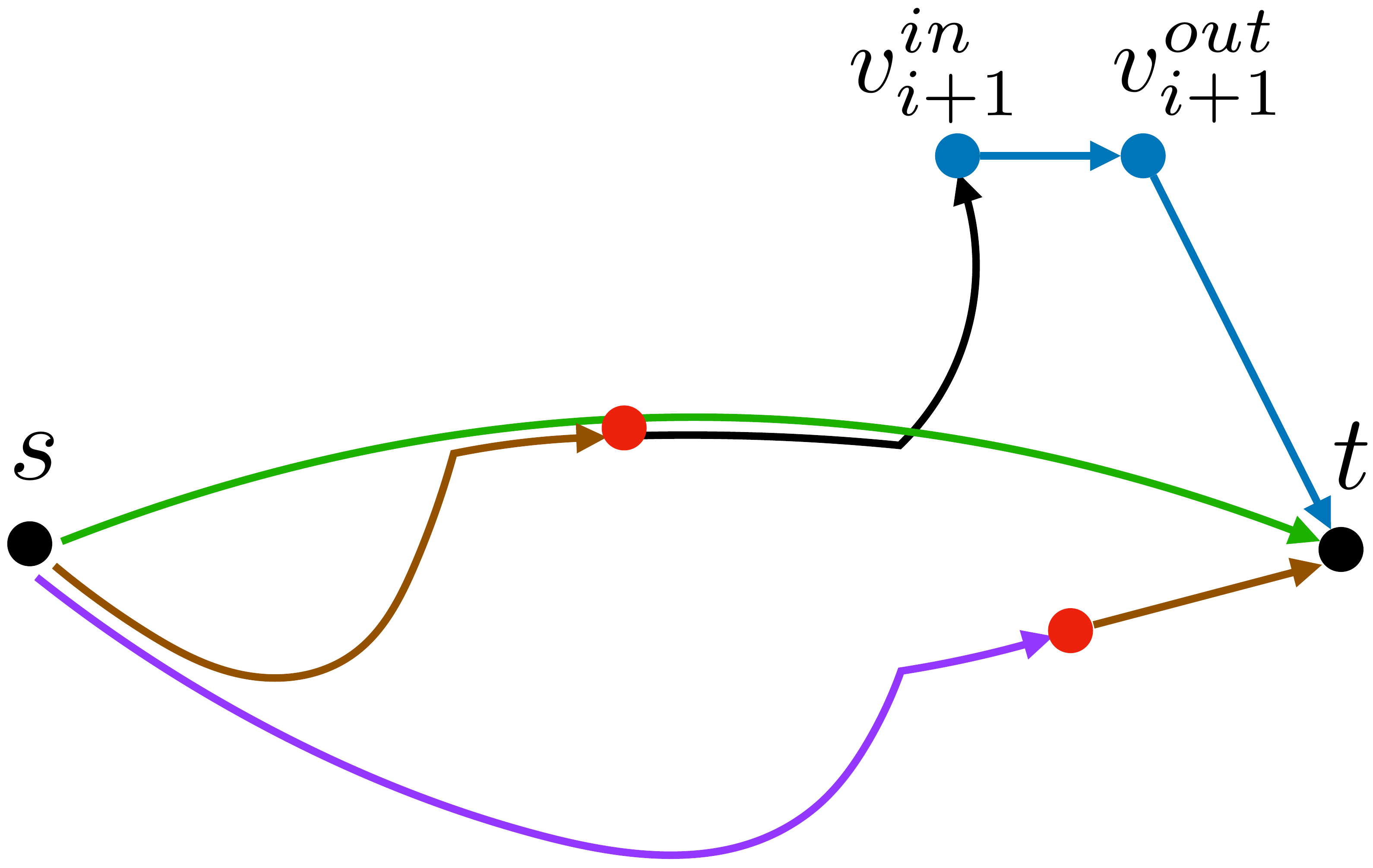}
        \caption[]%
        {{\small After splicing}}
        \label{subfig:after}
    \end{subfigure}
    \caption[]
    {\small Effect of the splicing along a decrementing path $D$ of $\residual{\flowG_{i+1}}{\tmp_{i+1}}$. We only show vertices $s, t, v^{in}_{i+1}, v^{out}_{i+1}$, just four flow paths in blue, green, brown and purple (with some overlap), and two red vertices where splicing of flow paths occurs (splicing points). \Cref{subfig:before} shows the four flow paths before splicing. Path $D$ is highlighted in dashed red (direct segments) and solid red (reverse segment). \Cref{subfig:after} shows that splicing along $D$ transforms the four flow paths into three. The reverse segment creates a subpath (black) of one of these. The direct segments remove subpaths of previous flow paths. The splicing points now join subpaths of the previous brown and blue, and purple and brown, paths respectively.}
    \label{fig:splicing}
\end{figure}

\Cref{fig:splicing} illustrates the effect of the splicing algorithm on flow paths.




 \subsubsection{Level and path updates}\label{sec:level-updates}

After obtaining $\pathcover_{i+1}$, we update the level of some vertices of $\flowV_{i+1}$ to maintain the invariants (\Cref{sec:levels-and-invariants}) of the level assignment $\ell$. Moreover, to sparsify (\Cref{sec:progressive-flows-algorithm}) in the next iteration, we also repair the path IDs on the vertices/edges of $\pathcover_{i+1}$ that could be in an inconsistent state after running the splicing algorithm.

If the smallest layer visited during the traversal is layer $l$, then we set $\ell(v_{i+1}^{in}) = l$, $\ell(v_{i+1}^{out}) = l+1$ (to maintain \invB{}, see \Cref{sec:invariants}), and change the level of every vertex $u$ visited during the traversal to $\ell(u) = l$ (to maintain \invA{}, see \Cref{sec:invariants}).

If a decrementing path was found (and the splicing algorithm was executed) we first repair the path IDs by traversing every flow path of $\pathcover_{i+1}$ backwards from the last vertex, until we arrive to a vertex of level less than $l$, from which we obtain the corresponding path ID that we then update by going back (forwards) in the flow path. After that, the following observations hold.

\begin{obs}\label{obs:end-vertices}
Let $E_{i} = \{u^{out} \in \flowV_i \mid \exists P \in \pathcover_{i}, \text{$u^{out}$ is the last vertex of $P$}\}$, and $A_{i+1}$ the singleton set containing the last vertex in the decrementing path found by the layered traversal in $\residual{\flowG_{i+1}}{\tmp_{i+1}}$, or the empty set if no decrementing path was found. Then, $E_{i+1} = E_i \cup \{v_{i+1}^{out}\} \setminus A_{i+1}$.
\end{obs}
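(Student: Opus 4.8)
The plan is to prove \Cref{obs:end-vertices} by directly tracking how the set of path-endpoints $E_i$ changes when we incorporate $v_{i+1}$. First I would recall the overall structure of one iteration: we form $\tmp_{i+1} = \pathcover_i \cup \{(v_{i+1}^{in}, v_{i+1}^{out})\}$, and then either (a) no decrementing path is found and $\pathcover_{i+1} = \tmp_{i+1}$, or (b) a decrementing path $D$ is found and $\pathcover_{i+1}$ is obtained by splicing along $D$. I would treat these two cases separately.

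In case (a), $A_{i+1} = \emptyset$ by definition, and $\pathcover_{i+1} = \pathcover_i \cup \{(v_{i+1}^{in}, v_{i+1}^{out})\}$. The new flow path $(v_{i+1}^{in}, v_{i+1}^{out})$ ends at $v_{i+1}^{out}$, and all other flow paths are exactly the old ones (as sets of vertices/edges), so their last vertices are unchanged. Hence $E_{i+1} = E_i \cup \{v_{i+1}^{out}\}$, which matches the claimed formula since $A_{i+1} = \emptyset$. The only subtlety here is to confirm that $v_{i+1}^{out} \notin E_i$ (it is a brand new vertex of $\flowV_{i+1} \setminus \flowV_i$), so the union genuinely adds a new element; but for the statement as an equality of sets this is not even needed.

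In case (b), I would use the structure of $D$ established in \Cref{sec:layered-traversal,sec:splicing-algorithm}: $D$ starts with $(s, v_{i+1}^{in})$, its second edge is the reverse edge $(v_{i+1}^{in}, u^{out})$ for some in-neighbor $u$ of $v_{i+1}$, and $D$ ends with a direct edge $(w^{out}, t)$ such that some flow path of $\pathcover_i$ ends at $w^{out}$ — so $A_{i+1} = \{w^{out}\}$ and $w^{out} \in E_i$. Now I examine what splicing does to the endpoints. The splicing algorithm builds one long new flow path by alternately appending reverse segments (newly created subpaths) and the "middle" pieces of spliced flow paths cut out along direct segments; the crucial endpoints are: this new path begins at $v_{i+1}^{in}$'s counterpart — more precisely it begins with $(v_{i+1}^{in}, v_{i+1}^{out})$ — wait, I need to be careful: actually the first reverse segment starts at $v_{i+1}^{in}$, and by \Cref{result:splicing} the other paths are only reconnected, preserving multiplicities, so the multiset of flow paths changes only by: removing the direct edge $(w^{out}, t)$'s effect, i.e. the flow path that used to end at $w^{out}$ now no longer reaches $t$ through $w^{out}$ in the same way. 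The clean way to argue is via flow: $\pathcover_{i+1}$ is a flow decomposition of the flow $f_{i+1}$ obtained from the flow induced by $\tmp_{i+1}$ by subtracting one unit along $D$. The set $E_{i+1}$ is exactly $\{w^{out} : f_{i+1}((w^{out}, t)) > f_i((w^{out}, t))$ \text{ or } $f_{i+1}((w^{out}, t)) = f_i((w^{out},t)) > 0\}$ — no, more simply, $u^{out} \in E_{i+1}$ iff $f_{i+1}((u^{out}, t)) \ge 1$, since the number of flow paths ending at $u^{out}$ equals $f_{i+1}((u^{out}, t))$. Decrementing along $D$ changes $f$ on edge $(w^{out}, t)$ by $-1$ and on edge $(v_{i+1}^{out}, t)$ by... hmm, actually $D$ may or may not end by passing through $v_{i+1}^{out}$. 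Let me instead phrase it: the only edges of the form $(\cdot^{out}, t)$ whose flow changes are $(v_{i+1}^{out}, t)$ (which goes from $1$ in $\tmp_{i+1}$ — wait, in $\tmp_{i+1}$ the path $(v_{i+1}^{in}, v_{i+1}^{out})$ contributes one unit to $(v_{i+1}^{out}, t)$) and $(w^{out}, t)$. So I would carefully check: $D$ contains $(w^{out}, t)$ as its last (direct) edge, decreasing $f((w^{out}, t))$ by $1$; and whether $(v_{i+1}^{out}, t)$ is on $D$ depends, but since $D$'s second edge leaves $v_{i+1}^{in}$ via a reverse edge (not through $v_{i+1}^{out}$), the edge $(v_{i+1}^{out}, t)$ is never on $D$, so $f_{i+1}((v_{i+1}^{out}, t)) = f_{\tmp_{i+1}}((v_{i+1}^{out}, t)) = 1$.

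The hard part — or at least the part requiring the most care — will be cleanly justifying that the flow-endpoint correspondence (a flow path ends at $u^{out}$ iff there is positive flow on $(u^{out}, t)$, and the number of such paths equals that flow value) interacts correctly with the splicing operation, given that splicing of \Cref{result:splicing} only guarantees preservation of edge multiplicities, not literal preservation of which path is which. I would resolve this by noting that \Cref{result:splicing} preserves $\mu_{\pathcover}(e)$ for all $e$, hence $\pathcover_{i+1}$ and the "naive" flow decomposition induce the same flow, and then argue purely at the level of the flow: $E_{i+1}$ is determined by the flow values on edges into $t$, and those change from $\tmp_{i+1}$'s flow exactly by $-1$ on $(w^{out}, t)$ (and nowhere else among edges into $t$), while $\tmp_{i+1}$'s flow into $t$ differs from $\pathcover_i$'s exactly by $+1$ on $(v_{i+1}^{out}, t)$. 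Combining, $f_{i+1}$'s support on edges into $t$ is that of $\pathcover_i$ plus $(v_{i+1}^{out},t)$ minus (possibly) $(w^{out},t)$ — the latter removed iff $f_i((w^{out},t)) = 1$, i.e. exactly one flow path of $\pathcover_i$ ended at $w^{out}$; since $\pathcover_i$ is a minimum flow decomposition with $|\pathcover_i| = \width(G_i)$ and distinct... actually $w^{out}$ could have had multiplicity $>1$, in which case it remains in $E_{i+1}$. But the statement writes $E_{i+1} = E_i \cup \{v_{i+1}^{out}\}\setminus A_{i+1}$ with $A_{i+1} = \{w^{out}\}$, which implicitly asserts $w^{out}$ had multiplicity exactly $1$; I expect this follows from \invC{} and the minimality of the flow (a minimum flow to $t$ through $\pathcover_i$ sends at most... ) — verifying this is the genuine obstacle, and I would lean on the invariants and the fact that $f_i((w^{out},t)) \le 1$ always holds in this construction because each $u^{out}$ sends at most one unit to $t$ in a minimum flow of $\flowG_i$ (otherwise the flow into $t$ could be rerouted, contradicting minimality, or more directly: $d((u^{out}, t)) = 0$ and by the structure of the reduction an optimal flow uses each such edge at most once). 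I would state this last fact as a small separate claim and prove it from minimality of $\pathcover_i$ before assembling the final equality.
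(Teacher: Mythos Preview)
Your case analysis and the idea of tracking endpoints via the flow values on edges $(u^{out}, t)$ are sound, and you correctly isolate the one nontrivial point: to conclude that $w^{out}$ genuinely leaves the endpoint set you need $\mu_{\pathcover_i}((w^{out}, t)) = 1$. However, your proposed justification for this claim is wrong. It is \emph{not} true that every minimum flow of $\flowG_i$ sends at most one unit on each edge $(u^{out}, t)$: take $G_i$ to be the DAG on $\{a,b,c\}$ with edges $(a,c)$ and $(b,c)$; then $\{(a,c),(b,c)\}$ is an MPC and the induced minimum flow has $f((c^{out},t)) = 2$. So neither ``rerouting would contradict minimality'' nor ``an optimal flow uses each such edge at most once'' is valid, and \invC{} does not rescue this either.

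The fix is to argue inductively about the \emph{specific} decomposition the algorithm produces: $\mu_{\pathcover_i}((u^{out},t)) \le 1$ for all $u$ holds for $\pathcover_1$ trivially, and each step preserves it because (as you yourself established) the only changes to flow on edges into $t$ are $+1$ on the fresh edge $(v_{i+1}^{out}, t)$ and, when $D$ exists, $-1$ on $(w^{out}, t)$. With this extra invariant in hand your argument is complete. The paper's own proof is a one-liner that only records which edge into $t$ lies on $D$ and does not spell out the multiplicity point; you were right to flag it as the crux, but the resolution must come from the algorithm's construction, not from minimality of the flow alone.
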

\begin{proof}
    If no decrementing path was found the observation easily follows. On the other hand, if a decrementing path $D$ is found, the observation follows from the fact that the only edge in $D$ of the form $(u^{out}, t)$ with $u^{out} \in E_i$, comes from $A_{i+1}$.
\end{proof}

\begin{obs}\label{obs:level-path-changes}
If $l$ is the smallest level visited by the layered traversal in $\residual{\flowG_{i+1}}{\tmp_{i+1}}$, then $|\pathcover_i^{\ge l'}|$ = $|\pathcover_{i+1}^{\ge l'}|$ for every $l' \in \{1, \ldots, \width(G_i)+1\} \setminus \{l+1\}$, and $|\pathcover_{i+1}^{\ge l+1}|=|\pathcover_i^{\ge l+1}| +1$.
\end{obs}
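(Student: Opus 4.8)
The plan is to track how the splicing along a decrementing path $D$ changes the multiset of path-levels $\{\ell(P) : P \in \pathcover_i\}$, and then to combine this with the level reassignment performed in \Cref{sec:level-updates}. Recall that by \Cref{result:splicing} splicing preserves the induced flow edge by edge ($\mu_{\pathcover}(e) = \mu_{\pathcover'}(e)$), so the only reason a path-level could change is that the splicing/segment-removal genuinely reroutes a flow path through a different set of vertices, and the post-splicing level reassignment only touches vertices $u$ that were \emph{visited} by the layered traversal, each of which is set to $\ell(u) = l$, together with $\ell(v_{i+1}^{in}) = l$ and $\ell(v_{i+1}^{out}) = l+1$.

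First I would handle the case where no decrementing path is found. Then $\pathcover_{i+1} = \tmp_{i+1} = \pathcover_i \cup \{(v_{i+1}^{in}, v_{i+1}^{out})\}$, all visited vertices get level $l$, and the new path $(v_{i+1})$ has level $\ell(v_{i+1}^{out}) = l+1$. No old path changes its level: an old path $P$ either avoids all visited vertices (its vertices keep their levels), or it passes through some visited vertex — but by \invA{} all of $P$'s vertices after the first visited one are also visited (a visited vertex is reachable from $s$ in the residual, hence so are its residual-successors along $P$), and the first visited vertex $u$ on $P$ satisfies $\ell(u) \ge l$ both before (it was reachable, and by the layered order it sits in some layer $\ge l$) and after (set to exactly $l$); one must check the maximum over $P$ is unchanged, which follows because the portion of $P$ before $u$ lies in layers $> l$ by \invA{} read along $P$'s residual-reverse edges... this needs a short argument but is the kind of bookkeeping that goes through. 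Consequently $\pathcover_{i+1}^{\ge l'} = \pathcover_i^{\ge l'}$ for all $l' \ne l+1$, while the single new path contributes exactly to every $\pathcover_{i+1}^{\ge l'}$ with $l' \le l+1$; since it also lies in all of those for $l' \le l$, the only count that strictly increases by one is at $l' = l+1$. This is precisely the claim.

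Next, the case where a decrementing path $D$ \emph{is} found. Here $|\pathcover_{i+1}| = |\pathcover_i|$, and the net effect of splicing plus segment-removal is: the temporary extra path $(v_{i+1})$ gets fused with an existing flow path, and the flow path that ended at the vertex $A_{i+1}$ (\Cref{obs:end-vertices}) is the one that disappears. I would argue that $D$ itself, before relabelling, is confined to layers $\{l, l+1, \dots\}$ with all of $D$'s internal vertices outside $\{v_{i+1}^{in},v_{i+1}^{out}\}$ being visited and hence relabelled to $l$; so after relabelling the reverse segments of $D$ create only level-$l$ vertices, and $v_{i+1}^{out}$ is the unique vertex of level $l+1$ newly introduced. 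The only path whose level can drop is the one previously ending at $A_{i+1}$: its old level was $\ge l+1$ (it ended at a layer reachable in the layered order, and \invB{} forced it above), and after the rerouting its level becomes exactly $l$ (all its relevant vertices are now at level $l$, and $v_{i+1}^{out}$, the only level-$(l+1)$ vertex around, is on a \emph{different} post-splice path). Symmetrically, exactly one path now has level $l+1$ (the one through $v_{i+1}^{out}$). So relative to $\pathcover_i$, the multiset of levels loses one value $\ge l+1$ and gains one value $= l+1$; for every threshold $l' \le l$ both $\pathcover_i$ and $\pathcover_{i+1}$ contain all $|\pathcover_i|$ paths, for $l' = l+1$ the count goes up by one, and for $l' \ge l+2$ the removed level was $\ge l+1$ but — here is the crux — it was in fact exactly $l+1$, not larger, so those upper counts are unchanged.

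I expect the \textbf{main obstacle} to be exactly that last crux: verifying that in the decrementing-path case the path ending at $A_{i+1}$ had level \emph{precisely} $l+1$ before the step (not merely $\ge l+1$), since otherwise the counts $|\pathcover^{\ge l'}|$ for $l' \ge l+2$ would move. The argument must use \invC{} together with the layered nature of the traversal: the traversal reaches $t$ from layer $l$, meaning the edge $(u^{out},t)$ with $u^{out} = A_{i+1}$ is explored when processing queue $Q_l$, so $A_{i+1}$ was enqueued from a vertex of level $l$, hence by \invA{} (applied to the residual edge into $A_{i+1}$) $\ell(A_{i+1}) \le l$; but $A_{i+1} = u^{out}$ is the end of a flow path so \invB{} gives $\ell(u^{in}) < \ell(u^{out})$, and $\ell(u^{in}) \ge l$ forces $\ell(A_{i+1}) \ge l+1$... which contradicts $\le l$ unless one is careful about whether $A_{i+1}$ itself is "visited'' versus merely "enqueued''. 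Pinning down this distinction — visited vertices get relabelled, the freshly-enqueued-but-unvisited $t$ does not, and $A_{i+1}$'s status in between — is the delicate point, and I would resolve it by appealing to the precise queue semantics of \Cref{sec:layered-traversal} (an element is visited only when dequeued, and the traversal halts the moment $t$ is reached), concluding that $A_{i+1}$ \emph{is} visited and its pre-step level was exactly $l+1$, or by directly invoking whichever of \invA{}/\invB{}/\invC{} the full proof in \Cref{sec:invariants} is set up to give.
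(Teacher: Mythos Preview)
Your proposal contains a genuine error in the decrementing-path case. You claim that the path of $\pathcover_i$ ending at $A_{i+1} = \{a^{out}\}$ has old level $\ge l+1$, and you make proving it is \emph{exactly} $l+1$ your ``crux''. In fact its old level is exactly $l$, not $l+1$. The reason is immediate from the mechanics of \Cref{sec:layered-traversal}: the queues $Q_{|\pathcover_i|}, Q_{|\pathcover_i|-1}, \ldots$ are processed in decreasing order; $a^{out}$ sits in $Q_{\ell_{\mathrm{old}}(a^{out})}$, and the moment it is dequeued and visited, $t$ is discovered as an out-neighbor and the traversal halts. Hence every visited vertex has level at least $\ell_{\mathrm{old}}(a^{out})$, so the minimum visited level is $l = \ell_{\mathrm{old}}(a^{out})$. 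Your appeal to \invB{} to force $\ell(a^{out}) \ge l+1$ via ``$\ell(a^{in}) \ge l$'' fails precisely because \invB{} gives $\ell_{\mathrm{old}}(a^{in}) < \ell_{\mathrm{old}}(a^{out}) = l$, so $a^{in}$ is not visited at all. With this correction the argument is one line: by \invA{} each path's level equals the level of its last vertex; by \Cref{obs:end-vertices} the multiset of endpoint levels loses one value equal to $l$ and gains one value equal to $l+1$ (the remaining endpoints in $E_i\setminus\{a^{out}\}$ are never visited, since visiting any of them would expose $t$ and halt the traversal earlier), so $|\pathcover^{\ge l'}|$ changes only at $l' = l+1$, by $+1$.

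Your no-decrementing paragraph is also internally inconsistent: you first assert $\pathcover_{i+1}^{\ge l'} = \pathcover_i^{\ge l'}$ for all $l' \ne l+1$, then immediately note that the new path $(v_{i+1})$ lies in $\pathcover_{i+1}^{\ge l'}$ for every $l' \le l+1$, which contradicts that equality for $1 \le l' \le l$ (indeed $|\pathcover_{i+1}| = |\pathcover_i|+1$ here). The paper states the observation right after describing the splicing branch and offers no proof, treating it as a direct consequence of the layered traversal; the single fact you are missing is that the layered order pins down $\ell_{\mathrm{old}}(a^{out}) = l$, after which nothing delicate remains.
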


Therefore, this is the only way \invC{} can be broken by the algorithm. As such, after the level and path ID updates, we check if $|\pathcover_{i+1}^{\ge l}| = |\pathcover_{i+1}^{\ge l+1}|$, and in that case we decrease the level of every vertex $u$, $\ell(u) \ge l$, by $1$. If this happens, we say that we \emph{merge} layer $l$. 

The running time of all these updates is bounded by $O(|\pathcover_{i}|) = O(k)$ per vertex of level $l$ or more, which dominates the running time of an step of the algorithm (except the initial sparsification).

\begin{figure}[t]
    \centering
    \begin{subfigure}[b]{0.32\textwidth}
        \centering
        \includegraphics[width=0.95\textwidth]{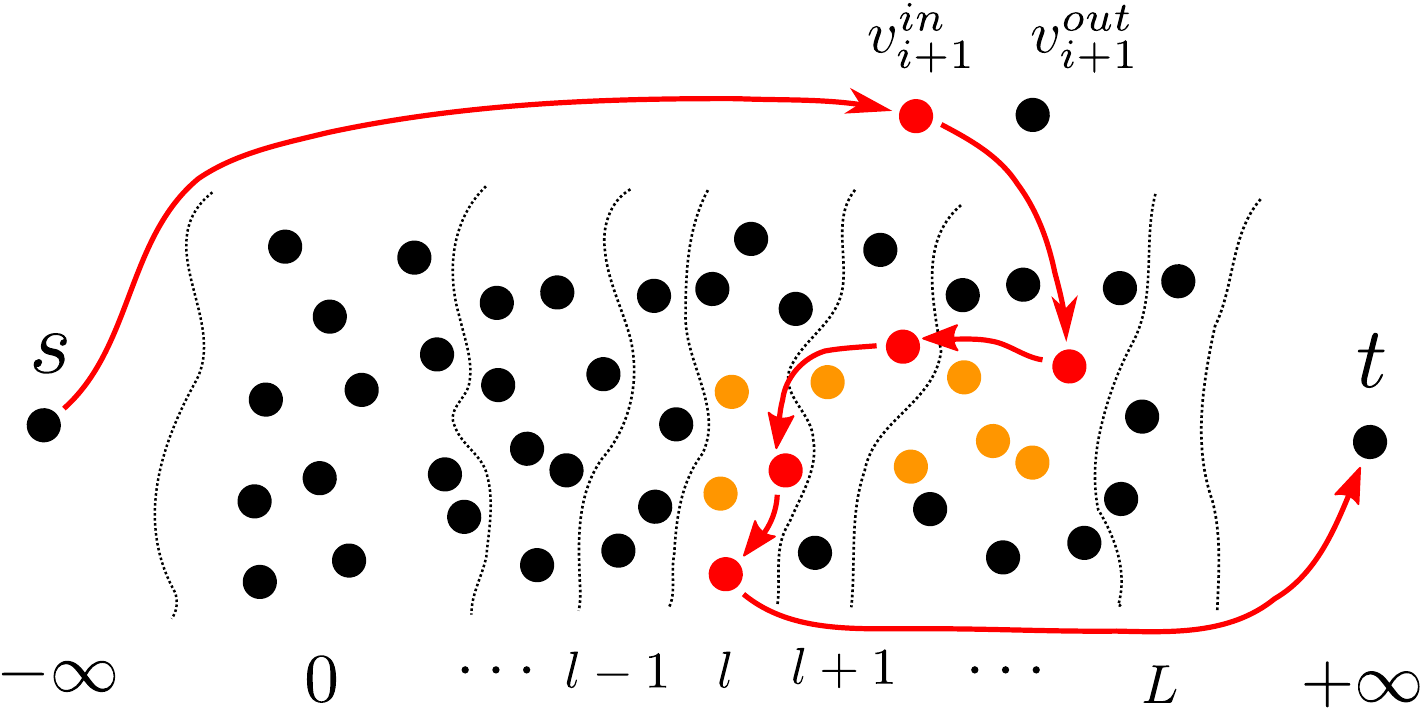}
        \caption[]%
        {{\small Layered traversal}}
        \label{subfig:traversal}
    \end{subfigure}
    \begin{subfigure}[b]{0.32\textwidth}
        \centering
        \includegraphics[width=0.95\textwidth]{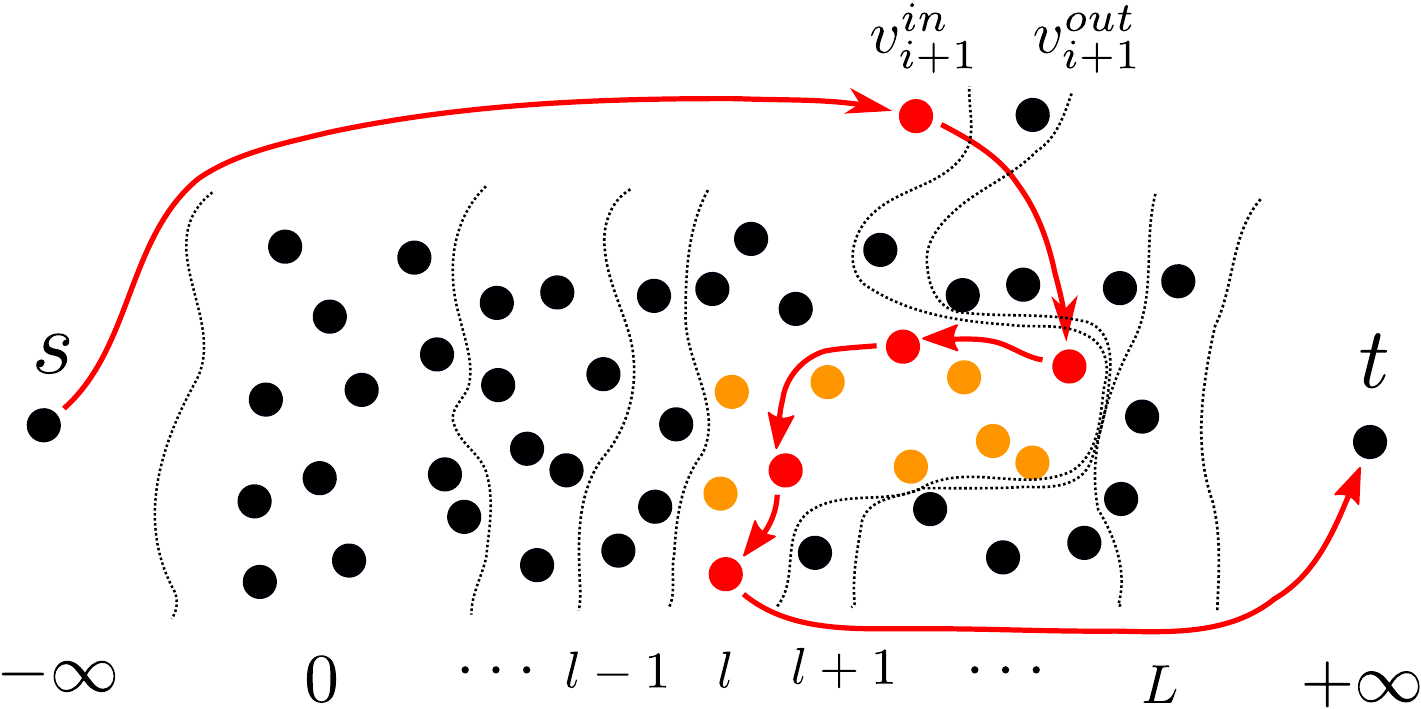}
        \caption[]%
        {{\small Level updates}}
        \label{subfig:updates}
    \end{subfigure}
    \begin{subfigure}[b]{0.32\textwidth}
        \centering
        \includegraphics[width=0.95\textwidth]{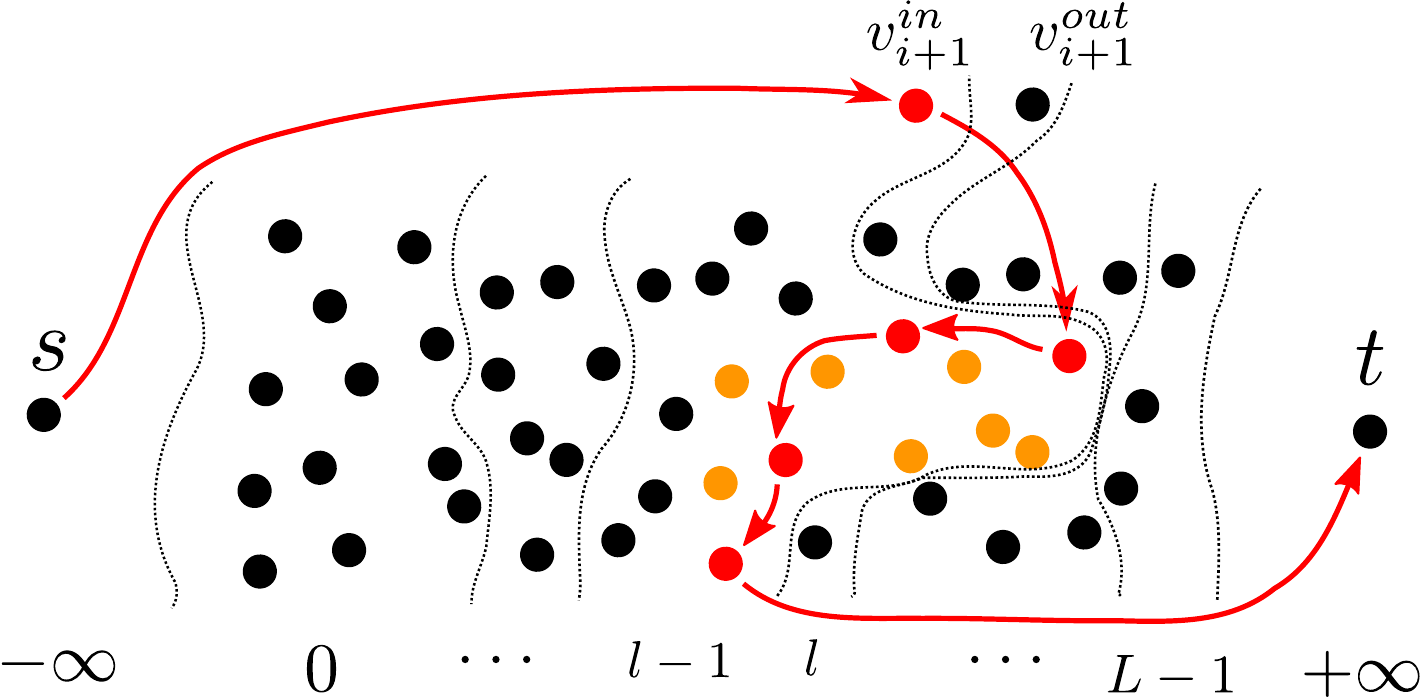}
        \caption[]%
        {{\small Merge of layer $l$}}
        \label{subfig:merge}
    \end{subfigure}
    \caption[]
    {\small Execution of our second algorithm in an abstract example graph. Edges and flow paths are absent for simplicity. Layers are divided by dotted vertical strokes, $L = \max_{v\in\flowV_i\setminus\{t\}}\ell(v)$. \Cref{subfig:traversal} shows a decrementing path in $\residual{\flowG_{i+1}}{\tmp_{i+1}}$ (red) found by the layered traversal as well as all vertices visited (red and orange), $l$ is the smallest layer visited. \Cref{subfig:updates} shows the updates to the level assignment, all vertices visited by the traversal get level $l$, and $v_{i+1}^{out}$ gets level $l+1$. \Cref{subfig:merge} shows the result of merging layer $l$, all vertices of level $l$ or more decrease their level by one.}
    \label{fig:progressive}
\end{figure}

\Cref{fig:progressive} illustrates the evolution of the level assignment in a step of the algorithm.

\subsection{Running time}\label{sec:running-time}
Note that the running time of step $i+1$ is bounded by $O(|N^{-}(v_{i+1})|)$ (from sparsification) plus $O(|\pathcover_{i}|) = O(k)$ per vertex whose level is $l$ or more, where $l$ is the smallest level visited by the layered traversal in  $\residual{\flowG_{i+1}}{\tmp_{i+1}}$. The first part adds up to $O(|E|)$ for the entire algorithm, whereas for the second part we show that every vertex is charged $O(k)$ only $O(k^2)$ times in the entire algorithm, thus adding up to $O(k^3|V|)$ in total.
Every time a vertex $u$ is charged $O(k)$, then the minimum level visited in that step must be $l\le \ell(u)$. Consider the sequence $(|\pathcover_i^{\ge 1}|,|\pathcover_i^{\ge 2}|,\ldots,|\pathcover_i^{\ge \ell(u) + 1}|)$ and its evolution until its final state $(|\pathcover_{|V|}^{\ge 1}|,|\pathcover_{|V|}^{\ge 2}|,\ldots,|\pathcover_{|V|}^{\ge \ell'(u) + 1}|)$ (where $\ell'$ is the level assignment when the algorithm finishes).  By \Cref{obs:level-path-changes}, any update that charges $u$ changes exactly one value in this sequence ($|\pathcover_{i}^{\ge l+1}|$ is incremented by one), and possibly truncates the sequence on the right due to $u$'s level being lowered (levels can only decrease over time).  By \invC, this sequence is always strictly decreasing, and since $|\pathcover_i^{\ge 1}| \le k$, it can be updated at most $O(k^2)$ times until it reaches its final state; hence $u$ is charged $O(k)$ only $O(k^2)$ times.

\subsection{Invariants}\label{sec:invariants}

In this section we show that the invariants of the algorithm (\Cref{sec:levels-and-invariants}) are maintained for the next step, namely that the invariants hold for $\flowG_{i+1}, \pathcover_{i+1}$ and the modified level assignment $\ell$.

\begin{description}
    \item[Invariant A]: Let us consider the residual network after level updates have been made to all visited vertices in $\flowV_{i+1}$ and $v_{i+1}^{out}$, but prior to possibly merging layer $l$ (if called for). Consider an edge $(u,v)$ in $\residual{\flowG_{i+1}}{\pathcover_{i+1}}$ with $\{u, v\} \cap \{s, t\} = \emptyset$.  If both $u$ and $v$ are visited, their levels are each set to $l$, so $\ell(u) \ge \ell(v)$ in $\flowG_{i+1}$.  If neither $u$ nor $v$ are visited, both vertices exist in $\flowV_i$ and the flow between these vertices is not modified by the decrementing path $D$, so $(u,v)$ in $\residual{\flowG_i}{\pathcover_i}$. Thus $\ell(u) \ge \ell(v)$ in $\flowG_{i+1}$ by the invariant of the previous iteration, since their levels are unchanged.  If $u$ is visited and $v$ is not, then again $(u,v)$ cannot belong to $D$, so either $u = v_{i+1}^{in}$, or $(u,v)$ in $\residual{\flowG_i}{\pathcover_i}$. In any case $(u, v)$ in $\residual{\flowG_i}{\tmp_{i+1}}$, thus it must be that $\ell(v) \le l$ in $\flowG_i$, otherwise $v$ would be visited during the layered traversal, so again $\ell(v) \le \ell(u)$ in $\flowG_{i+1}$, once $\ell(u)$ has been updated.  If $v$ is visited and $u$ is not, again $(u,v)$ cannot belong to $D$, so either $(u,v) = (v_{i+1}^{out},v_{i+1}^{in})$ and the invariant is maintained by the level assignment, or $(u,v)$ in $\residual{\flowG_i}{\pathcover_i}$, in which case $l \le \ell(v) \le \ell(u)$ prior to setting $\ell(v) = l$.  Thus, the invariant is maintained in all cases. Finally, it is easy to see that a merge of layer $l$ does not break the invariant.

    \item[Invariant B]: By \Cref{obs:end-vertices} the last edges of the paths of $\pathcover_{i+1}$ are $(v_{i+1}^{in}, v_{i+1}^{out})$, or of the form $(u, v)$ with $v \in E_i\setminus A_{i+1}$. As such, after splicing but before a possible merge of layer $l$, the invariant is maintained because the algorithm sets $\ell(v_{i+1}^{in}) = \ell(v_{i+1}^{out})-1 = l$, and it can only decrease the level of $u$ for the rest of the edges (since the vertices in $E_i\setminus A_{i+1}$ are not visited by the layered traversal). If a merge of layer $l$ happens then both extremes of each edge decrease their level by $1$, thus not breaking the invariant.
    \item[Invariant C]: By \Cref{obs:level-path-changes}, the only possibility to break the invariant is that $|\pathcover_{i+1}^{\ge l}| = |\pathcover_{i+1}^{\ge l+1}|$, but if this happens it is fixed by merging layer $l$.
    
\end{description}

\section{Support Sparsification Algorithm}
\label{sec:edge-thinning}

We present an algorithm that transforms any path cover $\pathcover, |\pathcover| = t$ of a DAG $G = (V, E)$ into one of the same size and using less than $2|V|$ distinct edges, in $O(t^2|V|)$ time (\Cref{thm:edge-thinning}). The main approach consists of splicing paths so that edges are removed from the \emph{support} $E_{\pathcover} = \{e \in P \mid P \in \pathcover\}$. It maintains a path cover $\pathcover', |\pathcover'| = t$ of $G' = (V, E_{\pathcover'})$ (thus also a path cover of $G$). At the beginning we initialize $\pathcover'\gets \pathcover$, and we splice paths so that at the end $|E_{\pathcover'}| < 2|V|$.

To decide how to splice paths, we color the vertices of $v \in V$ based on their degree, that is, if $deg_{G'}(v) \le 2$ we color $v$ \blue{}, and \red{} otherwise. We also color the edges $(u, v) \in E_{\pathcover'}$ according to the color of their endpoints, that is, if both $u$ and $v$ are \blue{}, we color $(u, v)$ \blue{}, likewise if both $u$ and $v$ are \red{}, we color $(u, v)$ \red{}, otherwise we color $(u, v)$ \purple{}. We traverse the underlying undirected graph of $G'$ in search of a \red{} cycle  (cycle of \red{} edges) $C$ and splice paths along $C$ so that at least one \red{} edge is removed from $G'$. We repeat this until no \red{} cycles remain, thus at the end we have that \red{} vertices and edges form a forest, \blue{} vertices and edges form a collection of vertex-disjoint paths and cycles, and \purple{} edges connect \red{} vertices with the extreme vertices of \blue{} paths. As such, if the number of \blue{} and \red{} vertices is $n_b$ and $n_r$, respectively, and the number of \blue{} paths is $p$, there are $n_b - p$ \blue{} edges, less than $n_r$ \red{} edges, and at most $2p$ \purple{} edges. Therefore, $|E_{\pathcover'}| < n_b -p + n_r + 2p \le 2|V|$, as desired. The following remark shows that the factor $2$ from the bound is asymptotically tight.

\begin{figure}[t]
    \centering
    \includegraphics[scale=0.4]{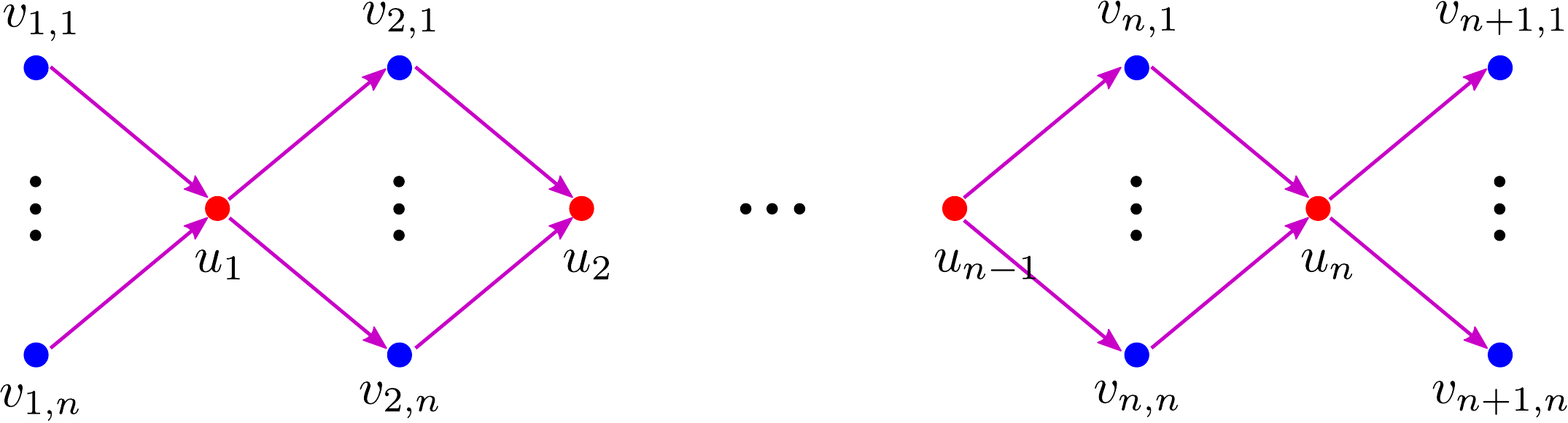}
    \caption{\small A DAG $G$ showing that the factor $2$ from the bound of \Cref{thm:edge-thinning} is asymptotically tight. The figure shows the example graph, as well as the result of applying \Cref{thm:edge-thinning} on an MPC of it. The algorithm colors vertices $v_{i,j}$ \blue{}, vertices $u_i$ \red{}, and edges \purple{}, thus it does not find any \red{} cycle.}
    \label{fig:2-tight}
\end{figure}

\begin{remark}\label{remark:2-tight}
Consider the DAG $G = (V,E)$ from \Cref{fig:2-tight}, with $|V| = n+ n(n+1) = n(n+2)$, $|E| = 2n^2$ and $\width(G) = n$. Note that any path cover $\pathcover$ of size $n$ must use every edge of the graph, then $|E_{\pathcover}|/|V| = |E_{\pathcover'}|/|V| = 2 - 4/(n+2)$.
\end{remark}

Recall that the \emph{multiplicity} of an edge $e$, $\mu(e)$, is the number of paths in $\pathcover'$ using $e$, that is, $\mu(e) = |\{P \in \pathcover' \mid e \in P\}|$. When processing a \red{} cycle $C = v_1, \ldots, v_l, v_{l+1} = v_1$, we partition the corresponding edges of $G'$ in either \emph{forward} $F = \{(v_{i}, v_{i+1}) \in E_{\pathcover'}\}$ or \emph{backward} $B = \{(v_{i+1}, v_{i}) \in E_{\pathcover'}\}$ edges. We splice either along forward or backward edges depending on the comparison between $\sum_{e \in F} \mu(e)$ and $\sum_{e \in B} \mu(e)$. If $\sum_{e \in F} \mu(e) \ge \sum_{e \in B} \mu(e)$, we only splice along backward edges, otherwise only along forward edges. Here we only describe the former case, the later is analogous.
The splicing procedure considers the backward \emph{segments} of the cycle, namely, maximal subpaths of consecutive backward edges in $C$. For each backward segment $b$, it generates a path $P_b \in P'$ that traverses $b$ entirely, by splicing paths along $b$. For this we apply the splicing procedure of \Cref{result:splicing} on every backward segment, which runs in total time $O(|B|) = O(|C|)$. After that, for every $P_b$ we remove $b$ and reconnect the parts of $P_b$ entering and exiting $b$ to their corresponding adjacent forward segments. Note that vertices of $b$ are still covered by some path after splicing since they are \red{}, and the splicing procedure preserves the multiplicity of edges. Also note that the net effect is that the number of paths remains unchanged but the multiplicity of forward edges has increased by one and the multiplicity of backward edges has decreased by one, thus the condition $\sum_{e \in F} \mu(e) \ge \sum_{e \in B} \mu(e)$ will be valid again after the procedure. As such, we repeat the splicing procedure until some backward edge has multiplicity $0$, removing $C$ in this way.

To analyze the running time of all splicing procedures during the algorithm, we consider the function $\Phi(G') = \sum_{e \in E_{\pathcover'}} \mu(e)^2$. We study the change of $\Phi(G')$ of applying the splicing procedure, $\Delta \Phi$. Since the only changes on multiplicity occur on forward and backward edges we have that 
\begin{align*}
\Delta \Phi &= \sum_{e\in F} \left((\mu(e)+1)^2-\mu(e)^2\right) + \sum_{e\in B} \left((\mu(e)-1)^2-\mu(e)^2\right)\\
&= |F| + |B| + 2 \left(\sum_{e \in F} \mu(e) - \sum_{e \in B} \mu(e)\right) \ge |C|.
\end{align*}

As such, each splicing procedure takes $O(|C|)$ time, and increases $\Phi(G')$ by at least $|C|$. Since at the end $\Phi(G') \le  t^2|E_{\pathcover'}|  \le t^22|V|$, the running time of all splicing procedures amounts to $O(t^2|V|)$.

Finally, we describe how to traverse the underlying undirected graph of $G'$ while detecting \red{} cycles in linear time, which is $O(t|V|)$. We perform a \emph{modified} DFS traversal of the graph. We additionally mark the edges as \emph{\processed{}} either when the edge is removed (gets multiplicity $0$), or when the traversal \emph{pops} this edge from the DFS stack\footnote{When an edge is marked as \processed{} we move it at the end of the adjacency list of the corresponding vertex. Therefore, the first edge in the adjacency list of a vertex is always not marked as \processed{}, unless all of them are.}. Since our graph is undirected, all edges are between a vertex and some ancestor in the DFS tree (no crossing edges), thus cycles can be detected by checking if the vertex being visited already is in the DFS stack (and it is not the top of the stack)\footnote{We can maintain an array \emph{in-stack} indicating whether a vertex is in the DFS stack.}. When a \red{} cycle is detected, then we pop from the DFS stack all vertices of the cycle, but without marking as \processed{} the corresponding edges. The cost of these pops plus the additional cost of traversing the edges of the cycle again in a future traversal is linear in the length of the cycle, thus these are charged to the corresponding splicing procedures of this cycle, and the cost of the traversal remains proportional to the size of the graph. 



\bibliographystyle{unsrt}
\bibliography{references}

\appendix
\section{Proof of \Cref{thm:parallel-d-c}}\label{sec:omitted-proofs}

\parallelsparsealgorithm*

\begin{proof}
    We use our algorithm from \Cref{thm:main-d-c}. Since the algorithm divides the problem into two disjoint subgraphs we can easily solve each sub-part by using separate processors, and then join the solutions in $O(|E| + k^2|V|)$ parallel steps. We first subdivide the problem into $O(\log{|V|})$ separate processors, that is, when the size of the input is $O(|V|/\log{|V|})$. We then we run the algorithm in the $O(\log{|V|})$ inputs in parallel, running in $O(|E| + k^2\left(|V|/\log{|V|}\right)\log{\left(|V|/\log{|V|}\right)}) = O(|E| + k^2|V|)$ parallel steps. Finally, all merges (sparsifying and shrinking) from the $O(\log{|V|})$ processors up to the root of the recursion tree are performed level by level. We execute the merges of a level in parallel, thus adding up to $O(|E| + k^2|V|)$ parallel steps in total.
\end{proof}

\section{Full version of \Cref{sec:preliminaries}} \label{sec:extended-preliminaries}

This section is a full version of \Cref{sec:preliminaries}. For the sake of completeness, since there is no definitive reference for some of these notions and results (with some of them considered folklore), we include their full definitions and proofs here.

\subsection{Basics}\label{sec:basics-extended}
A directed graph is a tuple $G = (V, E)$, where $V$ is a set of vertices and $E$ is a set of edges, $E\subseteq V^2$. For an edge $e=(u,v) \in E$, it is said that $e$ goes \emph{from} $u$ \emph{to} $v$, that $u$ and $v$ are \emph{neighbors}, and that $e$ is \emph{incident} to both $u$ and $v$. In particular, $u$ is an \emph{in-neighbor} of $v$, $v$ is an \emph{out-neighbor} of $u$, $e$ is an edge \emph{incoming} to $v$ and \emph{outgoing} from $u$. We denote $N^+(v)$ ($N^-(v)$) to the set of out-neighbors (in-neighbors) of $v$, and by $I^+(v)$ ($I^-(v)$) the edges outgoing (incoming) from (to) $v$. A graph $S = (V_S, E_S)$ is said to be a \emph{subgraph} of $G$ if $V_S \subseteq V$ and $E_S \subseteq E$. If $V_S = V$ it is called \emph{spanning} subgraph. If $V' \subseteq V$, then $G[V']$ is the subgraph of $G$ \emph{induced} by $V'$, defined as $G[V'] = (V', E_{V'})$, where $E_{V'} = \{(u, v) \in E ~:~ u,v \in V'\}$. A \emph{path} $P$ in $G$ is a sequence of vertices $v_1, \ldots , v_{\ell}$ of $G$, such that $(v_i, v_{i+1}) \in E$, for all $i \in [1\ldots \ell-1]$, and $v_{i}\not= v_{j}$, for all $i\not=j$. For every $i,j \in\{1,\ldots,\ell\}, i\le j,$ $v_{i},\ldots,v_j$ is a \emph{subpath} of $P$. If $v_{1} = v_{\ell}$ it is called \emph{cycle}, and we denote it by $C$. If $\ell\ge 2$ it is said that the path is \emph{proper}. A \emph{directed acyclic graph} (DAG) is a directed graph without proper cycles. A \emph{topological ordering} of a DAG is a total order of $V$, $v_1,\ldots, v_{|V|}$, such that for all $(v_{i}, v_{j}) \in E$, $i < j$. A topological ordering can be computed in $O(|V|+|E|)$ time~\cite{kahn1962topological,tarjan1976edge}. If there exists a path $P = v_1, \ldots , v_{\ell}$ in $G$, with $u = v_1$ and $v = v_{\ell}$, it is said that $u$ \emph{reaches} $v$. A \emph{path cover} $\pathcover$ of $G$ is a set of paths such that every vertex $v\in V$ appears in some path of $\pathcover$. If $\pathcover$ has maximum size among all path covers, then it is a \emph{minimum path cover} (MPC), and its size corresponds to the \emph{width} of $G$, that is, $\width(G) = \min_{\pathcover, \text{path cover}} |\pathcover|$ . An \emph{antichain} $A$ is a set of vertices such that for each $u,v \in A$ $u\not=v$ $u$, does not reach $v$, a \emph{maximum antichain} is an antichain of maximum size. Dilworth's theorem~\cite{dilworth2009decomposition} states that the size of a maximum antichain equals the size of an MPC. The \emph{multiplicity} of an edge $e\in E$ with respect to a set of paths $\pathcover$, $\mu_{\pathcover}(e)$ (only $\mu(e)$ if $\pathcover$ is clear from the context), is defined as the number of paths in $\pathcover$ that contain $e$, $\mu_{\pathcover}(e) = |\{P\in\pathcover\mid e \in P\}|$.

In our algorithms we work with subgraphs induced by a consecutive subsequence of vertices in a topological ordering. As such, the following lemma, proven by Cáceres et.al~\cite{caceres2021a}, shows that we can bound the width of these subgraphs by $k = \width(G)$.

\topologicalDoNotIncreaseWidth*

\subsection{Minimum Flow}\label{sec:min-flow-extended}
The problem of minimum flow with lower and upper bounds on edges has been studied before (see for example \cite{ahujia1993network,ciurea2004sequential,bang2008digraphs}). The concept of maximum ow-cuts has been studied before but only in the context of some specific problem solved by a reduction to minimum flow (see for example \cite{mohring1985algorithmic,pijls2013another,marchal2018parallel}). For completeness, in this section we include a proof for the case when only lower bounds on the edges are considered. The proof shown is an adaptation of the proof of the maximum flow/minimum cut theorem given in~\cite{williamson2019network}.

Given a (directed) graph $G = (V, E)$, a source $s \in V$, a sink $t \in V$, and a function of \emph{lower bounds} or \emph{demands} on its edges $d: E \rightarrow \Nzero$, an $st$-\emph{flow} (or just \emph{flow} when $s$ and $t$ are clear from the context) is a function on the edges $f: E \rightarrow \Nzero$, satisfying $f(e) \ge d(e)$ for all $e \in E$ ($f$ \emph{satisfies the demands}) and $\sum_{e \in I^-(v)} f(e) = \sum_{e \in I^+(v)} f(e)$ for all $v \in V \setminus \{s,t\}$ (\emph{flow conservation}). If a flow exists, the tuple $(G, s, t, d)$ is said to be a \emph{flow network}. The \emph{size} of $f$ is the net amount of flow exiting $s$, formally $|f| = \sum_{e \in I^+(s)} f(e) - \sum_{e \in I^-(s)} f(e)$. An $st$-\emph{cut} (or just \emph{cut} when $s$ and $t$ are clear from the context) is a partition $(S , T)$ of $V$ such that $s \in S$ and $t \in T$. An edge $(u, v)$ \emph{crosses} the cut $(S, T)$ if $u\in S$ and $v \in T$, or vice versa. If there are no edges \emph{crossing} the cut from $T$ to $S$, that is, if $\{(u,v) \in E \mid u \in T, v \in S\} = \emptyset$, then $(S, T)$ is a \emph{one-way cut} (ow-cut). The \emph{demand} of an ow-cut is the sum of the demands of the edges crossing the cut, formally $d((S,T)) = \sum_{e = (u, v), u \in S, v \in T} d(e)$. An ow-cut whose demand is maximum among the demands of all ow-cuts is a \emph{maximum ow-cut}.

From these definitions the following properties can be derived:
\begin{basicprop*}\label{prop:basicsMinFlow}
For a flow network $(G, s, t, d)$:
\begin{enumerate}
    \item[(a)] For any cut $(S,T)$ and flow $f$:
    \begin{align*}
        |f| = \sum_{e = (u, v) \in E, u \in S, v \in T} f(e) - \sum_{e = (v, u) \in E, u \in S, v \in T} f(e).
    \end{align*}
    \item[(b)] For any ow-cut $(S,T)$ and flow $f$, $|f| \ge d((S,T))$.
    
\end{enumerate}
\end{basicprop*}
\begin{proof}
  \begin{enumerate}
    \item[(a)] By definition of size, flow conservation and the fact that $(S, T)$ is a partition of $V$.
    \begin{align*}
        |f| &= \sum_{e \in I^+(s)} f(e) - \sum_{e \in I^-(s)} f(e)\\
        &= \left(\sum_{e \in I^+(s)} f(e) - \sum_{e \in I^-(s)} f(e)\right) + \sum_{u \in S\setminus \{s\}}\underbrace{\left(\sum_{e \in I^+(u)} f(e) - \sum_{e \in I^-(u)} f(e)\right)}_{=0}\\
        &=\sum_{u \in S}\left(\sum_{e \in I^+(u)} f(e) - \sum_{e \in I^-(u)} f(e)\right)\\
        &= \sum_{u \in S}\left(\sum_{e = (u, u') \in E, u' \in S} f(e) + \sum_{e = (u, v) \in E, v \in T} f(e)   - \sum_{e = (u', u) \in E, u' \in S} f(e) - \sum_{e = (v, u) \in E, v \in T} f(e)\right)\\
        &= \sum_{u \in S}\left(\sum_{e = (u, v) \in E, v \in T} f(e)  - \sum_{e = (v, u) \in E, v \in T} f(e)\right) + \underbrace{\sum_{u \in S}\left(\sum_{e = (u, u') \in E, u' \in S} f(e) - \sum_{e = (u', u) \in E, u' \in S} f(e)\right)}_{=0}\\
        &= \sum_{u \in S}\left(\sum_{e = (u, v) \in E, v \in T} f(e)   - \sum_{e = (v, u) \in E, v \in T} f(e)\right)\\
        &= \sum_{e = (u, v) \in E, u \in S, v \in T} f(e) - \sum_{e = (v, u) \in E, u \in S, v \in T} f(e)
    \end{align*}
    
    \item[(b)] By using the previous property, the fact that ow-cuts do not have edges crossing from $T$ to $S$ and the lower bounds on the edges.
    \begin{align*}
        |f| &= \sum_{e = (u, v) \in E, u \in S, v \in T} f(e) - \sum_{e = (v, u) \in E, u \in S, v \in T} f(e) \\
        &= \sum_{e = (u, v) \in E, u \in S, v \in T} f(e)\\
        & \ge \sum_{e = (u, v) \in E, u \in S, v \in T} d(e)\\
        &= d((S,T))
    \end{align*}
    
\end{enumerate}  
\end{proof}

Given a \emph{flow network} $(G, s, t, d)$, the problem of \emph{minimum flow} consists of finding a flow $f^*$ of minimum size $|f^*|$ among the flows of the network, such flow is a \emph{minimum flow}. If a minimum flow exists, then $(G, s, t, d)$ is a \emph{feasible} flow network. The following theorem relates the maximum demand of a ow-cut with the size of a minimum flow~\cite{ahujia1993network,ciurea2004sequential,bang2008digraphs}.

\begin{restatable}{theorem}{maxowcutminflow}\label{thm:maxowcut-minflow}
Let $(G, s, t, d)$ be a feasible flow network. Then,
\begin{align*}
    \max_{(S, T), st\text{-}ow\text{-}cut} d((S,T)) = \min_{f, st\text{-}flow} |f|.
\end{align*}
\end{restatable}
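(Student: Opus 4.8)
\textbf{Proof plan for \Cref{thm:maxowcut-minflow}.}

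The plan is to mirror the classical max-flow/min-cut argument, adapted to the setting where edges carry only lower bounds and we are \emph{minimizing} flow. The inequality $\max_{(S,T)} d((S,T)) \le \min_f |f|$ is already in hand: it is exactly Basic Property (b), which says $|f| \ge d((S,T))$ for every flow $f$ and every ow-cut $(S,T)$. So the whole content is the reverse inequality, i.e.\ exhibiting, for a minimum flow $f^*$, an ow-cut whose demand equals $|f^*|$.

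First I would invoke the characterization of minimality via residual networks already stated in the preliminaries: a flow $f$ is minimum if and only if $\residual{G}{f}$ contains no decrementing path, i.e.\ no $s$-to-$t$ path. (The excerpt states this and defers its proof to the appendix; I would either cite it or prove the easy direction inline — if a decrementing path existed we could push flow backward along it and strictly reduce $|f|$.) Take a minimum flow $f^*$ and define $S = \{v \in V \mid s \text{ reaches } v \text{ in } \residual{G}{f^*}\}$ and $T = V \setminus S$. Then $s \in S$ trivially, and $t \in T$ because no decrementing path exists; so $(S,T)$ is an $st$-cut.

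The next step is to check two things about the edges crossing $(S,T)$ in the original graph $G$. For an edge $e = (v,u) \in E$ with $u \in S$, $v \in T$ (crossing from $T$ to $S$): its reverse edge $(u,v)$ lies in $E_{f^*}$ by definition of the residual, so from $u \in S$ we would get $v \in S$, a contradiction. Hence there are \emph{no} edges crossing from $T$ to $S$, which is precisely what makes $(S,T)$ an \emph{ow-cut}. For an edge $e = (u,v) \in E$ with $u \in S$, $v \in T$ (crossing from $S$ to $T$): if $f^*(e) > d(e)$ then $e$ itself is a direct edge of $\residual{G}{f^*}$, so $u \in S$ forces $v \in S$, again a contradiction; therefore $f^*(e) = d(e)$ for every such forward-crossing edge. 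Now plug these two facts into Basic Property (a):
\begin{align*}
|f^*| &= \sum_{e=(u,v)\in E,\, u\in S,\, v\in T} f^*(e) \;-\; \sum_{e=(v,u)\in E,\, u\in S,\, v\in T} f^*(e) \\
&= \sum_{e=(u,v)\in E,\, u\in S,\, v\in T} d(e) \;-\; 0 \;=\; d((S,T)).
\end{align*}
Combined with Basic Property (b) this pins down $|f^*| = d((S,T)) = \max_{(S,T)} d((S,T)) = \min_f |f|$, completing the proof.

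The only genuinely delicate point is the reliance on the residual-network optimality criterion ("$f$ minimum $\iff$ no decrementing path in $\residual{G}{f}$"). Establishing the nontrivial direction — that the \emph{absence} of a decrementing path implies minimality — is essentially the theorem itself (the cut we build \emph{certifies} minimality), so care is needed not to argue circularly: one should either take "if there is a decrementing path then $f$ is not minimum" as the cheap, self-contained half and let the cut construction above supply the converse, or cite the appendix's standalone treatment. Everything else is bookkeeping with the two Basic Properties, and the construction of $S$ as the reachable set in the residual graph is the standard move.
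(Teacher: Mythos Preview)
Your proposal is correct and follows essentially the same approach as the paper: define $S$ as the set of vertices reachable from $s$ in $\residual{G}{f^*}$ for a minimum flow $f^*$, argue that $(S,T)$ is an ow-cut because reverse residual edges prevent $T$-to-$S$ edges in $G$, that all $S$-to-$T$ edges are saturated at their demand, and conclude $|f^*|=d((S,T))$. Your treatment is slightly more explicit in invoking Property~(a) and in flagging the potential circularity around the residual-network optimality criterion, but the argument is the same.
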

\begin{proof}
Given a flow $f$ in $(G, s, t, d)$, the \emph{residual network} of $G$ with respect to $f$ is defined as $\residual{G}{f} = (V, E_f)$ with $E_f = \{(u, v) \mid (v, u) \in E\} \cup \{e \mid f(e) > d(e)\}$, that is, the \emph{reverse edges} of $G$, plus the edges of $G$ on which the flow can be decreased without violating the demands (\emph{direct edges}). Note that a path from $s$ to $t$ in $\residual{G}{f}$ can be used to create another flow $f'$ of smaller size by increasing flow on reverse edges and decreasing flow on direct edges of the path, such path its is called \emph{decrementing path}. Therefore, for a minimum flow $f^*$ there is no decrementing path in $\residual{G}{f^*}$. Taking $S$ as the vertices reachable from $s$ in $\residual{G}{f^*}$ (and $T$ its complement), $(S,T)$ is an ow-cut ($s \in S$, $t \in T$, and there is no edge in $G$ from $T$ to $S$, since there is no edge in the opposite direction in $\residual{G}{f^*}$ by definition of $S$). Moreover, for every edge $e\in E$ from $S$ to $T$, $f(e) = d(e)$, since otherwise this edge would appear in $\residual{G}{f^*}$, which is not possible by definition of $S$. Therefore, the inequality of Property (b) is an equality and $|f^*| = d((S,T))$. Finally, since the demand of any ow-cut is a lower bound for the size of the flow, $(S,T)$ is maximum ow-cut.
\end{proof}



\subsection{MPC in DAGs through Minimum Flow}\label{sec:minflow-reduction-extended}
The reduction from MPC in DAGs to minimum flow has been stated several times in the literature~\cite{ntafos1979path,mohring1985algorithmic,gavril1987algorithms,Jagadish90,ciurea2004sequential,rademaker2012optimal,pijls2013another,marchal2018parallel,makinen2019sparse}, we include it here for completeness.

The problem of finding an MPC in a DAG $G = (V, E)$ can be solved by a reduction to the problem of minimum flow on an appropriate feasible flow network $(\flowG = (\flowV, \flowE), s, t, d)$, defined as: $\flowV = \{s, t\} \cup \{v^{in} \mid v \in V\} \cup \{v^{out} \mid v \in V\}$ ($\{s,t\}\cap V = \emptyset$), that is, the source $s$, the sink $t$ and two vertices $v^{in}, v^{out}$ representing a division of every vertex $v \in V$; $\flowE = \{(s, v^{in}) \mid v \in V\} \cup \{(v^{out}, t) \mid v \in V\} \cup \{(v^{in}, v^{out}) \mid v \in V\} \cup \{(u^{out}, v^{in}) \mid (u,v) \in E\}$, that is, $s$ is connected to all vertices $v^{in}$, $t$ from all vertices $v^{out}$, the split vertices are connected from $v^{in}$ to $v^{out}$ if $v \in V$, and also the topology of $G$ is represented by connecting from $u^{out}$ to $v^{in}$ if $(u,v) \in E$. The demands are defined as $d(e) = 1$ if $e = (v^{in}, v^{out})$ for some $v \in V$ and $0$ otherwise. The tuple $(\flowG, s, t, d)$ is the \emph{flow reduction} of $G$. Note that $|\flowV| = O(|V|)$, $|\flowE|=O(|E|)$, and $\flowG$ is a DAG.

A path cover $\pathcover = P_1, \ldots, P_\ell$ of $G$ directly translates into a flow $f$ for $\flowG, s, t, d$ of size $|f| = \ell$. Starting with a function $f(e) = 0, e \in \flowE$ and iteratively increasing it. For every path $P_i$, it suffices to attach $s$ and $t$ at the ends and to replace every $v \in P_i$ by $v^{in}, v^{out}$, then the flow through the edges of the resulting path is increased by $1$. Since the flow is increased through paths from $s$ to $t$ this procedure maintains the flow conservation constrains, furthermore, since $\pathcover$ is a path cover, the flow through every edge $(v^{in}, v^{out})$ is increased by at least $1$ for every $v \in V$, thus $f$ corresponds to a flow of size $|\pathcover|$.

Moreover, every flow $f$ of $(\flowG, s, t, d)$ can be decomposed into $|f|$ paths corresponding to a path cover of $G$. Iteratively, starting from $f$, a path $P$ from $s$ to $t$ whose edges have positive flow is found, and then the flow on the edges of $P$ is decreased by $1$. By flow conservation, $P$ can be found while $|f| > 0$, and since $|f|$ is decreased by $1$ at each iteration, exactly $|f|$ paths are obtained. By construction of $\flowG$ these paths can easily be transformed into a path cover of size $\ell$ of $G$, by removing $s$ and $t$ and merging the split vertices.

As such, a minimum flow of $(\flowG, s, t, d)$ provides an MPC of $G$. Moreover, the set of edges of the form $(v^{in}, v^{out})$ crossing a maximum ow-cut corresponds to a maximum antichain of $G$ (by merging the edges $(v^{in}, v^{out})$ into $v$, see~\Cref{sec:antichain-structure}). By further noting that if $f$ is a minimum flow of $(\flowG, s, t, d)$, and defining $S = \{v \in \flowV \mid s\text{ reaches } v\text{ in }\residual{\flowG}{f}\}$, then $(S, T=\flowV\setminus S)$ corresponds to a maximum ow-cut, we obtain the following result.
\fastantichain*
\begin{proof}
    We build the flow reduction $(\flowG, s, t, d)$ of $G$, and $\residual{\flowG}{\pathcover}$ in time $O(k|V|+|E|)$. Then, we traverse $\residual{\flowG}{\pathcover}$ and find all vertices reachable from $s$, those vertices form a maximum ow-cut, the edges of the form $(v^{in}, v^{out})$ crossing this cut represent a maximum antichain of $G$ (see \Cref{sec:antichain-structure}).
\end{proof}

\subsection{Sparsification, shrinking, splicing} 

\paragraph{Transitive sparsification.} We say that a spanning subgraph $S = (V, E_S)$ of a DAG $G = (V, E)$ is a \emph{transitive sparsification} of $G$, if for every $u, v \in V$, $u$ reaches $v$ in $S$ if and only if $u$ reaches $v$ in $G$. Since $G$ and $S$ have the same reachability relations on their vertices, they share their antichains, thus $\width(G) = \width(S)$. As such, an MPC of $S$ is also an MPC of $G$, thus the edges $E\setminus E_S$ can be safely removed for the purpose of computing an MPC of $G$. If we have a path cover $\pathcover$ of size $t$ of $G$, then we can \emph{sparsify} (remove some transitive edges) the incoming edges of a particular vertex $v$ to at most $t$ in time $O(t+|N^-(v)|)$. If $v$ has more than $t$ in-neighbors then two of them belong to the same path, and we can remove the edge from the in-neighbor appearing first in the path. We create an array of $t$ elements initialized as $survivor = (v_{-\infty})^t$, where $v_{-\infty} \not \in V$ is before every $v\in V$ in topological order. Then, we process the edges $(u,v)$ incoming to $v$, we set $i \gets path(u)$ ($path(u)$ gives the ID of some path of $\pathcover$ containing $u$) and if $survivor[i]$ is before $u$ in topological order we replace it $survivor[i] \gets u$. Finally, the edges in the sparsification are $\{(survivor[i], v) \mid i\in\{1,\ldots,t\} \land survivor[i] \not = v_{-\infty}\}$.

\obssparsification*

By first computing a $path$ function, and then applying \Cref{obs:sparsificationVertex} to every vertex we obtain.

\sparsificationalgorithm*
\begin{proof}
    Let $\pathcover = P_1,\ldots,P_t$. First, we traverse each path in time $O(t|V|)$ and compute for every vertex $path(v)$, which is the ID of some path containing $v$. We also initialize $v.survivor[i] = u$ if $(u, v)$ is an edge of path $P_i$ and $v_{-\infty}$ if such edge does not exist ($v_{-\infty}\not \in V$, is before every $v \in V$ in topological order). Then, we process the edges $e = (u, v)$ in time $O(|E|)$, set $i = path(u)$, and if $v.survivor[i]$ is before $u$ in topological order, we set $v.survivor[i] = u$. Finally, $E_S$ will be the edges $(v.survivor[i], v)$ such that $v.survivor[i] \not = v_{-\infty}$, thus there are at most $t|V|$. Note that $S$ contains all the edges in the paths because we initialized $v.survivor[i] = u$ for every edge $(u, v)$ in path $P_i$, and these are not updated during the algorithm, thus $\pathcover$ is also a path cover of $S$. Now we prove that $S$ is a transitive sparsification of $G$. If an edge appears in $S$, is of the form $(u = v.survivor[i], v)$ for some edge $(u, v)$ of $G$, thus $S$ is a subgraph of $G$. Finally, if an edge $(u, v)$ is not considered in $S$ it means that there is an edge $(x, v)$ such that, $u, x \in P_i$ with $u$ before $x$ in $P_i$. Therefore, there is a path from $u$ to $v$ using the corresponding edges of $P_i$ followed by $(x, v)$.
\end{proof}

The following lemma shows that we can locally sparsify a subgraph and apply these changes to the original graph to obtain a transitive sparsification.
\sparsificationOfSubgraph*
\begin{proof}
    Since $S'$ is a transitive sparsification of $S$, $E_{S'}\subseteq E_S$ thus $E\setminus (E_S \setminus E_{S'}) \subseteq E$ and then $G'$ is a subgraph of $G$. Now, suppose by contradiction that $u$ and $v$ are connected in $G$ by a path $P$, but they are not connected in $G'$. Then, $P$ contains an edge $e = (a, b)\in E_S\setminus E_{S'}$ disconnecting $b$ from $a$ in $S'$, but since $S'$ is a transitive sparsification of $S$, $a$ is connected to $b$ in $S'$, which is a contradiction.
\end{proof}

\paragraph{Shrinking.} As explained before, shrinking is the process of transforming an arbitrary path cover $\pathcover$ into an MPC, and it can be solved by finding $|\pathcover|-\width(G)$ decrementing paths in $\residual{\flowG}{\pathcover}$, and then decomposing the resulting flow into an MPC.  M\"akinen et al.~\cite{makinen2019sparse} apply this idea to shrink a path cover of size $O(k\log{|V|})$. We generalize this approach in the following lemma.

\shrinkingalgorithm*
\begin{proof}
    We build the flow reduction $(\flowG, s, t, d)$ of $G$, and $\residual{\flowG}{\pathcover}$ in time $O(|V|+|E|)$. Then, we shrink the corresponding flow to minimum by finding $t-k$ decrementing paths in $t-k$ traversals of $\flowG$, and finally, we decompose the minimum flow into an MPC in additional $k$ traversals (one per path) of $\flowG$. In total this takes $O(t(|V|+|E|))$ time.
\end{proof}

This lemma is used by our first MPC algorithm. Our second algorithm also uses the concept of shrinking to obtain an MPC, but refines the search for decrementing paths so that it can be amortized to parameterized linear time  (see \Cref{sec:progressive-flows}).

\paragraph{Splicing.} Our last technique consists in reconnecting paths in a path cover $\pathcover$ so that (after reconnecting) at least one of the paths contains as a subpath a certain path $D$, in time $O(|D|)$. We call this process \emph{splicing} of $\pathcover$ through $D$, and it is used to apply the changes required by a decrementing path in our second MPC algorithm (\Cref{sec:splicing-algorithm}), and also to reconnect paths for reducing the number of edges used by an MPC (\Cref{sec:edge-thinning}). Splicing additionally requires that for every edge $e$ of $D$ there is at least one path in $\pathcover$ containing $e$.

\splicingalgorithm*
\begin{proof}
    We process the edges of $D$ one by one, and maintain a path $P$ of the path cover that contains as subpath a prefix of $D$, at the end of the algorithm $P$ will contain the whole $D$ as a subpath as required. We initialize $P$ to be some path of $\pathcover$ containing the first edge of $D$. Then, when processing the next edge $e$ of $D$, we first check if $e$ is the next edge of $\pathcover$, if so we continue to the next edge of $D$. Otherwise, let $P'$ be a path of the path cover containing $e$, then we connect the prefix of $P'$ until $e$ (excluding) with the suffix of $P$ from the edge previous to $e$ in $D$ (excluding), and we also connect the prefix of $P$ until the edge previous to $e$ in $D$ (including) with the suffix of $P'$ from $e$ (including). Note that each these new connections can be made by manipulating pointers in $O(1)$ time, also note that the new set of paths forms a path cover, and the edges of $G$ preserve their multiplicity, as edges in the path cover are never created or removed, only change of path. 
\end{proof}

Because of the last property of $\pathcover'$, the flow induced by $\pathcover$ is the same as the flow induced by $\pathcover'$. As such, if $\pathcover$ is a flow decomposition of a flow $f$, then $\pathcover'$ is also a flow decomposition of $f$.

\section{Structure of antichains}\label{sec:antichain-structure}

Recall that in \Cref{sec:levels-and-invariants} we defined $\pathcover_i^{\ge l} \subseteq \pathcover_i$, as the flow paths whose level is at least $l$, $\pathcover_i^{\ge l} = \{P \in \pathcover_i \mid \ell(P)\ge l\}$. We also define $\flowV_i^{\ge l} \subseteq \flowV_i$ to be the vertices in the paths of $\pathcover_i^{\ge l}$, $\flowV_i^{\ge l} = \{v \in P \mid P \in \pathcover_i^{\ge l}\}$, and $\flowG_i^{\ge l}$ the graph induced by those vertices, $\flowG_i^{\ge l} = \flowG_i[\flowV_i^{\ge l}]$. Note that $\pathcover_i^{\ge l}$ induces a flow in $\flowG_i^{\ge l}$. Finally, we define $S_i^{< l} \subseteq \flowV_i^{\ge l}$ to be those vertices whose level is less than $l$, $S_i^{< l} = \{v \in \flowV_i^{\ge l} \mid \ell(v) < l\}$.

\begin{lemma}\label{lemma:optimality-by-invariant-a}
In $\flowG_i^{\ge l}$, $(S_i^{<l}, \flowV_i^{\ge l}\setminus S_i^{<l})$ is a maximum ow-cut and $\pathcover_i^{\ge l}$ induces a minimum flow.
\end{lemma}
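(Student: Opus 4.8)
The plan is to establish four facts about the flow network $\flowG_i^{\ge l}$ (with source $s$, sink $t$, and the demands inherited from $\flowG_i$): (i) $\pathcover_i^{\ge l}$ is a flow of it; (ii) it is a \emph{minimum} flow; (iii) $(S_i^{<l},\flowV_i^{\ge l}\setminus S_i^{<l})$ is an $st$-ow-cut; and (iv) its demand equals $|\pathcover_i^{\ge l}|$. Combined with \Cref{thm:maxowcut-minflow}, facts (ii)--(iv) show the cut is maximum, and (ii) is the second assertion. Fact (i) is immediate: each $P\in\pathcover_i^{\ge l}$, with its extremes $s,t$ reattached, is an $st$-path whose vertices all lie in $\flowV_i^{\ge l}$, so the superposition conserves flow; and any edge $(v^{in},v^{out})$ of $\flowG_i^{\ge l}$ has $v^{in}\in\flowV_i^{\ge l}$, hence lies on some $P\in\pathcover_i^{\ge l}$, so its demand $1$ is met.

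For (ii) the clean observation is that $\residual{\flowG_i^{\ge l}}{\pathcover_i^{\ge l}}$ is a subgraph of $\residual{\flowG_i}{\pathcover_i}$: every reverse edge of $\flowG_i^{\ge l}$ is a reverse edge of $\flowG_i$, and any direct edge $e$ of the smaller residual has $\mu_{\pathcover_i^{\ge l}}(e)>d(e)$, hence $\mu_{\pathcover_i}(e)\ge\mu_{\pathcover_i^{\ge l}}(e)>d(e)$, so $e$ is direct in the larger one too. Since $\pathcover_i$ is an MPC of $G_i$, it induces a minimum flow, so $\residual{\flowG_i}{\pathcover_i}$ has no $s$-$t$ path; therefore neither does the subgraph $\residual{\flowG_i^{\ge l}}{\pathcover_i^{\ge l}}$, and by the characterization of minimality through the absence of decrementing paths, $\pathcover_i^{\ge l}$ is a minimum flow of $\flowG_i^{\ge l}$.

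For (iii): $\ell(s)=-\infty<l\le+\infty=\ell(t)$, so $s\in S_i^{<l}$ and $t\notin S_i^{<l}$; and no edge crosses from $\flowV_i^{\ge l}\setminus S_i^{<l}$ to $S_i^{<l}$, for such an edge $(u,v)$ would have $\ell(u)\ge l>\ell(v)$ with $u,v\notin\{s,t\}$, while its reverse $(v,u)$ lies in $\residual{\flowG_i}{\pathcover_i}$, contradicting \invA{}. For (iv), one inequality is the elementary bound $|\pathcover_i^{\ge l}|\ge d(\cdot)$ valid for every flow and ow-cut. For the other, since nothing crosses backwards, $|\pathcover_i^{\ge l}|$ is the total flow on the edges crossing from $S_i^{<l}$ to its complement; such an edge is one of $(v^{in},v^{out})$ with $\ell(v^{in})<l\le\ell(v^{out})$ (demand $1$), $(u^{out},v^{in})$ with $\ell(u^{out})<l\le\ell(v^{in})$ (demand $0$), $(s,v^{in})$ with $\ell(v^{in})\ge l$ (demand $0$), or $(v^{out},t)$ with $\ell(v^{out})<l$ (demand $0$). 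A crossing $(v^{in},v^{out})$ carries flow exactly $1$: multiplicity $\ge2$ would make it a direct edge of $\residual{\flowG_i}{\pathcover_i}$ which, together with its reverse, forces $\ell(v^{in})=\ell(v^{out})$ by \invA{}, impossible here; so these contribute exactly $d(\cdot)$. A crossing $(u^{out},v^{in})$ with positive flow would be a direct residual edge, so \invA{} gives $\ell(u^{out})\ge\ell(v^{in})$, contradicting $\ell(u^{out})<l\le\ell(v^{in})$; so it carries none. No flow path can end at a crossing $(v^{out},t)$: along any $P\in\pathcover_i^{\ge l}$, \invA{} applied to each internal edge $(x^{out},y^{in})$ and to its reverse forces $\ell(x^{out})=\ell(y^{in})$, while \invA{} on the reverse of each $(x^{in},x^{out})$ gives $\ell(x^{in})\le\ell(x^{out})$, so the level is non-decreasing along $P$ and its last vertex $v^{out}$ satisfies $\ell(v^{out})=\ell(P)\ge l$, contradicting $\ell(v^{out})<l$; so these carry none. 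Hence $|\pathcover_i^{\ge l}|=d(\cdot)$ as soon as no flow path \emph{begins} at a crossing edge $(s,v^{in})$, i.e.\ no path of $\pathcover_i$ starts at a vertex of level $\ge l$.

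That last proviso is the crux, and it is the one point not forced by \invA{}, \invB{}, \invC{} alone (one can write down level assignments satisfying all three for which the level-$l$ cut is not maximum), so here one must appeal to the concrete level assignment built by the algorithm: the updates of \Cref{sec:level-updates} together with \Cref{obs:end-vertices} maintain that every flow path of $\pathcover_i$ starts at a level-$0$ vertex (equivalently, $S_i^{<1}$ is exactly the set of vertices reachable from $s$ in $\residual{\flowG_i}{\pathcover_i}$). Granting this, each $v^{in}$ that begins a flow path has $\ell(v^{in})=0<l$, no such path crosses at an $(s,v^{in})$ edge, and $|\pathcover_i^{\ge l}|=d(\cdot)$ follows; being the size of a minimum flow, this common value certifies, via \Cref{thm:maxowcut-minflow}, that $(S_i^{<l},\flowV_i^{\ge l}\setminus S_i^{<l})$ is a maximum ow-cut. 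Everything else is a routine pairing of \invA{} with standard flow/cut facts; the work is entirely in that boundary step.
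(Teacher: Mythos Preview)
Your argument and the paper's share the same core mechanism: use \invA{} to show that no residual edge leaves $S_i^{<l}$, hence every crossing edge carries exactly its demand, so cut demand equals flow value and both extremality claims follow. The paper does this in four lines; you unpack it into (i)--(iv). Two points distinguish your write-up from the paper's.

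First, your separate proof of (ii) via the inclusion $\residual{\flowG_i^{\ge l}}{\pathcover_i^{\ge l}}\subseteq\residual{\flowG_i}{\pathcover_i}$ is an independent route to minimality: the paper obtains minimality only as a by-product of cut tightness, whereas you get it directly from the absence of decrementing paths in the full residual. This is a genuine (if small) alternative, and it makes (ii) robust even before the cut is analyzed.

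Second, and more importantly, you are right that \invA{} explicitly excludes edges touching $s$ and $t$, so the blanket claim ``no residual edge exits $S_i^{<l}$'' needs separate handling of $(s,v^{in})$ and $(v^{out},t)$. Your level-monotonicity argument cleanly dispatches $(v^{out},t)$. For $(s,v^{in})$ with $\ell(v^{in})\ge l$ you correctly observe that the three invariants alone do \emph{not} rule out positive flow there; your parenthetical ``one can write down level assignments satisfying all three for which the level-$l$ cut is not maximum'' is true (e.g.\ two disjoint paths $(a,c)$ and $(b,d)$ with levels $0,0,0,1$ on $a^{in},a^{out},c^{in},c^{out}$ and $1,1,1,2$ on $b^{in},b^{out},d^{in},d^{out}$ satisfies \invA{}, \invB{}, \invC{}, yet $S_i^{<1}$ has demand $1<2$). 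The paper's proof glosses over exactly this point.

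The one loose end in your proposal is the justification of the extra fact you invoke. \Cref{obs:end-vertices} concerns the \emph{last} vertices of flow paths, not the first, so that citation does not do the work you need. What you actually need is the symmetric statement that (a) splicing along a decrementing path leaves the set of start vertices of $\pathcover_i$ unchanged (since the only $s$-edge on $D$ is $(s,v_{i+1}^{in})$, whose flow drops from $1$ to $0$), and (b) when no decrementing path is found, the new start vertex $v_{i+1}^{in}$ receives level $0$. Part (a) is the easy dual of \Cref{obs:end-vertices}; part (b) requires a short argument about what ``smallest layer visited'' means when the traversal starts from an empty neighborhood, and that levels of existing start vertices are never raised. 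Both are true for the algorithm as described, but neither follows from \Cref{obs:end-vertices} or from the stated invariants, so you should state and prove them explicitly rather than cite.
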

\begin{proof}
By definition of $S_i^{<l}$ and \invA, there are no edges in $\residual{\flowG_i^{\ge l}}{\pathcover_i^{\ge l}}$ exiting $S_i^{<l}$. As such, there cannot be edges crossing from $\flowV_i^{\ge l}\setminus S_i^{<l}$ to $S_i^{<l}$ in $\flowG_i^{\ge l}$ as these imply reverse residual edges, thus is a ow-cut. Moreover, the flow on every edge crossing the cut from $S_i^{<l}$ to $\flowV_i^{\ge l}\setminus S_i^{<l}$ must be exactly the demand of the edge, otherwise it implies a direct residual edge. Therefore, it is a maximum ow-cut and $\pathcover_i^{\ge l}$ induces a minimum flow.
\end{proof}

There exists a close relation between maximum antichains of a DAG $G$ and maximum ow-cuts on its flow reduction $(\flowG, s, t, d)$ (\Cref{sec:minflow-reduction-extended}), which has been studied before (see for example \cite{mohring1985algorithmic,rademaker2012optimal,pijls2013another,marchal2018parallel}). If $A$ is a maximum antichain of $G$, then the cut $(S,T)$, defined by $S = \{u \in \flowV \mid \exists v \in A, u\text{ reaches }v^{in}\}, T = \flowV\setminus S$ is a maximum ow-cut (since its demand is $|A| = \width(G)$, and the size of a minimum flow of the flow reduction is exactly $\width(G)$, see \Cref{sec:minflow-reduction-extended}). Moreover, if $(S, T)$ is a maximum ow-cut, then the edges of the form $(v^{in}, v^{out})$ crossing the cut, form (after merging every edge in the corresponding vertex) a maximum antichain of $G$ (a path between two vertices implies an edge crossing the cut from $T$ to $S$). As such, each of the maximum ow-cuts $(S_i^{<l}, \flowV_i^{\ge l}\setminus S_i^{<l})$ corresponds to a maximum antichain of an induced subgraph of $G$. In particular, \Cref{lemma:optimality-by-invariant-a} implies that our second algorithm implicitly (through the layer assignment) maintains a sequence of size-decreasing antichains.

\end{document}